\documentclass[letterpaper]{article} 
\usepackage{aaai25}  
\usepackage{times}  
\usepackage{helvet}  
\usepackage{courier}  
\usepackage[hyphens]{url}  
\usepackage{graphicx} 
\urlstyle{rm} 
\usepackage{natbib}  
\usepackage{caption} 
\frenchspacing  
\setlength{\pdfpagewidth}{8.5in} 
\setlength{\pdfpageheight}{11in} 
%
\usepackage[linesnumbered,ruled,vlined]{algorithm2e}

\usepackage{todonotes}
\presetkeys{todonotes}{inline}{}
%
%
\pdfinfo{
/TemplateVersion (2025.1)
}

\setcounter{secnumdepth}{2} 

%

\usepackage{times}
\usepackage{soul}
\usepackage{url}
\usepackage[utf8]{inputenc}
\usepackage{graphicx}
\usepackage{xcolor}
\usepackage{amsmath}
\usepackage{booktabs}
\urlstyle{same}
\usepackage{enumitem}
\usepackage{dsfont}
\usepackage{enumerate}
\usepackage{nicefrac}
\usepackage{amsthm}
\usepackage{cleveref}
\usepackage{amssymb}
\usepackage{algpseudocode}
\usepackage{pifont}

\newtheorem{theorem}{Theorem}[section]
\newtheorem{corollary}[theorem]{Corollary}
\newtheorem{proposition}[theorem]{Proposition}

\newtheorem{observation}[theorem]{Observation}

\theoremstyle{definition}
\newtheorem{definition}[theorem]{Definition}
\newtheorem{example}[theorem]{Example}
\newtheorem{remark}[theorem]{Remark}


\title{Verifying Proportionality in Temporal Voting}
\author {
    Edith Elkind\textsuperscript{\rm 1},
    Svetlana Obraztsova\textsuperscript{\rm 2},
    Jannik Peters\textsuperscript{\rm 3},
    Nicholas Teh\textsuperscript{\rm 4}
}
\affiliations {
    \textsuperscript{\rm 1}Northwestern University, USA\\
    \textsuperscript{\rm 2}Carleton University, Canada\\
    \textsuperscript{\rm 3}National University of Singapore, Singapore\\
    \textsuperscript{\rm 4}University of Oxford, UK\\
    edith.elkind@northwestern.edu, svetlanaobraztsova@cunet.carleton.ca, peters@nus.edu.sg, nicholas.teh@cs.ox.ac.uk
}

\usepackage{bibentry}

\begin{document}

\maketitle

\begin{abstract}
    We study a model of temporal voting where there is a fixed time horizon, and at each round the voters report their preferences over the available candidates and a single candidate is selected. Prior work has adapted popular notions of justified representation as well as voting rules that provide strong representation guarantees from the multiwinner election setting to this model. In our work, we focus on the complexity of verifying whether a given outcome offers proportional representation. We show that in the temporal setting verification is strictly harder than in multiwinner voting, but identify natural special cases that enable efficient algorithms.
\end{abstract}

\section{Introduction}
Consider a large corporation that has decided to improve its public image and to give back to the society by engaging in \emph{corporate philanthropy} over the next decade.
They will commit a small fraction of their profits towards supporting the efforts of a single charitable organization, to be selected on an annual basis.
The management of this corporation decides to ask its customers, staff, and shareholders for input as to which charity organization it should select each year.
Furthermore, the charity selected is of strategic importance---it would directly impact the company's corporate image and hence profitability.
As such, it is important for the company to ensure that the selection is representative of what its customers, staff, and shareholders care about and that it would create maximum impact for the charity organization they choose to support.

It is natural to view this problem through the lens of  
multiwinner voting \cite{faliszewski2017mwv,elkind2017propertiesmwv,lackner2022abc}:
indeed, the corporation's goal is to select a fixed-size subset of {\em candidates}
(in this case, charities) while respecting the preferences of \emph{agents} (in this case, the customers, staff, and shareholders). In this context, several notions of representation and fairness have been proposed over the past decade, spanning across proportional representation~\cite{elkind2017propertiesmwv,aziz2020expanding}, diversity \cite{bredereck2018diversity,celis2018fairness}, and excellence, amongst others \cite{lackner2022abc}.
Perhaps the most prominent among these is the concept of \emph{justified representation (JR)} and its variants (such as proportional justified representation (PJR) and extended justified representation (EJR)), which aim to capture the idea that large cohesive groups of voters should be fairly represented in the final outcome \cite{aziz2017jr,sf2017pjr,aziz2018complexityepjr,peters2021fjr,Brill023}.

However, the existing notions of fairness do not fully capture the complexity of our setting: traditional multiwinner voting models consider decisions made in a single round, with the entire set of candidates to fund  (or candidates) being chosen simultaneously.
In contrast, in our model the decisions are made over time, voters' preferences 
may evolve, and a candidate may be chosen multiple times. This calls for adapting
the JR axioms to the temporal setting.

Temporal considerations in the multiwinner voting setting have been studied recently, most notably in the line of work known as \emph{perpetual voting} \cite{lackner2020perpetual,lackner2023proportionalPV} or \emph{temporal voting} \cite{elkind2024temporalelections,zech2024multiwinnerchange}, and have broad real-world applications (see the recent survey by Elkind et al.~\shortcite{elkind2024temporal}). 
In particular, Bulteau et al.~\shortcite{bulteau2021jrperpetual} and, subsequently, 
Chandak et al.~\shortcite{chandak23} defined temporal analogues of the justified
representation axioms and investigated whether existing multiwinner rules with 
strong axiomatic properties can be adapted to the temporal setting so as to
satisfy the new axioms.

\subsection{Our Contributions}
We build upon the works of Bulteau et al.~\shortcite{bulteau2021jrperpetual}
and Chandak et al.~\shortcite{chandak23}, and focus on the complexity 
of {\em verifying} whether a given solution satisfies 
justified representation axioms. This task is important if, e.g., 
the outcome is fully or partially determined by external considerations, 
so explicitly using an algorithm to obtain an outcome with strong representation guarantees is not feasible, 
but representation remains an important concern.
In multiwinner voting setting, the verification problem is known to be coNP-hard 
for PJR and EJR, but polynomial-time solvable for JR. 
We argue that the existing complexity results do not automatically
transfer from the multiwinner setting to the temporal setting. However, 
we develop new proofs specifically tailored to the temporal setting, 
and show that in temporal elections all three
properties are coNP-hard to verify, even under strong constraints on the structure
of the input instance. Our complexity result for JR 
shows that the temporal setting is strictly harder than the multiwinner setting.

We complement our hardness results by fixed parameter tractability results
as well as a polynomial-time algorithm for a natural special case of our model
where candidates may join the election over time, but never leave, and voters'
preferences over available candidates do not change.
We also develop an integer linear programming formulation for the problem
of selecting an outcome that provides EJR
and satisfies additional linear constraints on voters' utilities, 
thereby establishing that this problem is fixed-parameter tractable with respect 
to the number of voters $n$.

Finally, we (partially) answer an open question of Chandak et al.~\shortcite{chandak23}, 
by showing that the prominent Greedy Cohesive Rule~\cite{bredereckF0N19,peters2021fjr}  can be adapted to the temporal setting.

\subsection{Related Work}
Our work belongs to the stream of research on \emph{perpetual}, or {\em temporal voting} \cite{lackner2020perpetual,lackner2023proportionalPV,elkind2024temporalelections}.
This line of work started by considering temporal extensions of popular multiwinner voting rules and simple axiomatic properties.
Bulteau et al.~\shortcite{bulteau2021jrperpetual} were the first to adapt notions of proportional representation to the temporal setting. They 
consider several temporal variants of JR and PJR, and 
show that a JR outcome can be computed in polynomial time even for the most
demanding notion of JR among the ones they consider; for PJR, 
they prove the existence of an outcome satisfying the axiom, but their
proof, while constructive, is based on an exponential-time algorithm. Page et al.~\shortcite{page2021electing} proposed very similar concepts in the context of electing the executive branch. This work is extended by Chandak et al.~\shortcite{chandak23}, who also propose a temporal variant of EJR, 
and show that several multiwinner voting rules
that satisfy PJR/EJR in the standard model can be extended
to the temporal setting so as to satisfy temporal variants of PJR/EJR.
Bredereck et al.~\shortcite{bredereck2020successivecommittee,bredereck2022committeechange} look at sequential committee elections whereby 
an entire committee is elected in each round, and impose
constraints on the extent a committee can change, whilst ensuring that the candidates retain sufficient support from the electorate.
Zech et al.~\shortcite{zech2024multiwinnerchange} study a similar model, but focus on the class of Thiele rules.
Elkind et al.~\shortcite{elkind2024temporal} offer a systematic review of recent work 
on temporal voting.

The temporal voting model captures apportionment with approval preferences \cite{brill2024partyapportionment,delemazure2022spelection}, by endowing the voters with preferences that do not change over time.
Other models in the social choice literature that include temporal elements include
public decision-making \cite{alouf2022better,conitzer2017fairpublic,fain2018publicgoods,skowron2022proppublic,lackner2023freeriding,MPS23a,neoh2025strategicchores}, scheduling \cite{elkind2022temporalslot,patro2022virtualconf}, resource allocation over time \cite{bampis2018fairtime,allouah2023fairallocationtime,elkind2024temporalfairdivision}, online committee selection \cite{do2022onlinecommittee}, and dynamic social choice \cite{parkes2013dynamicsocialchoice,freeman2017dynamicsocialchoice}.

Lackner et al.~\shortcite{lackner2021longtermpb} propose a framework for studying long-term participatory budgeting, and study fairness considerations in that setting.

\section{Preliminaries}\label{sec:prelim}
We start by introducing the basic model of multiwinner voting with approval 
ballots.
A {\em multiwinner approval election} is given 
by a set of {\em candidates} $P$, a set of {\em voters} $N$, a list of {\em approval sets} $(s_i)_{i\in N}$, 
where $s_i\subseteq P$ for each $i\in N$, and an integer $k$; 
the goal is to select a {\em committee}, i.e., a size-$k$ subset of $P$.
There is a large body of research on axioms
and voting rules for this model~\cite{lackner2022abc}.

In this work, we consider temporal voting with approval ballots, as 
defined by Bulteau et al.~\shortcite{bulteau2021jrperpetual} and Chandak et al.~\shortcite{chandak23}.
A {\em temporal election} is a tuple $(P, N, \ell, ({\mathbf s}_i)_{i\in N})$,
where $N = \{1,\dots,n\}$ is a set of \emph{voters}, 
      $P = \{p_1,\dots,p_m\}$ is a set of $m$ {\em candidates}, 
       $\ell$ is the number of \emph{rounds}, and, 
       for each $i\in N$, ${\mathbf s}_i = (s_{i,1},s_{i,2},\dots, s_{i,\ell})$, 
       where $s_{i, t}\subseteq P$ is the {\em approval set} of voter $i$
       in round $t$, which consists of candidates that $i$ approves in round $t$.
We refer to $\mathbf{s}_i$ as $i$'s \emph{temporal preference};
for brevity, we will sometimes omit the term ``temporal''.
An \emph{outcome} of a temporal election $(P, N, \ell, ({\mathbf s}_i)_{i\in N})$ is a sequence $\mathbf{o} = (o_1,\dots,o_\ell)$
of $\ell$ candidates such that for every $t \in [\ell]$ candidate $o_t \in P$ is chosen in round $t$. 
A candidate may be selected multiple times, i.e., it may be the case that $o_t=o_{t'}$ for $t\neq t'$.
A voter $i$'s \textit{satisfaction} from an outcome $\mathbf{o}$ is computed as $\textit{sat}_i(\mathbf{o}) = |\{t \in [\ell]: o_t \in s_{i,t}\}|$.

An important concern in the multiwinner setting is group fairness, i.e., making
sure that large groups of voters with similar preferences are represented by
the selected committee. The most well-studied group fairness axioms are (in increasing
order of strength) {\em justified representation (JR)}~\cite{aziz2017jr}, {\em proportional justified representation (PJR)}~\cite{sf2017pjr}, and {\em extended justified representation (EJR)}~\cite{aziz2017jr}.
We will now formulate these axioms, as well as their extensions
to the temporal setting; for the latter, we follow the terminology of Chandak et al.~\shortcite{chandak23}. Definition~\ref{def:jr-static} may appear
syntactically different from the standard definitions of these notions, but 
it can easily be shown to be equivalent to them; further, it has the advantage
of being easily extensible to the temporal setting.

\begin{definition}\label{def:jr-static}
For a multiwinner election $(P, N, (s_i)_{i\in N}, k)$
and a group of voters $N'\subseteq N$, we define
the {\em demand} of $N'$ as 
$$
\alpha^{\textit{mw}}(N') = \min\{\beta^{\textit{mw}}(N'), \gamma^{\textit{mw}}(N')\}, \quad \text{where}
$$ 
$$
\beta^{\textit{mw}}(N')= |\cap_{i\in N'}s_i| \quad\text{ and }\quad
\gamma^{\textit{mw}}(N')=\left\lfloor  k\cdot \frac{|N'|}{n}\right\rfloor.
$$ 
A committee $W\subseteq P$ provides {\em justified representation (JR)} 
if for every 
$N'\subseteq N$ with $\alpha^{\textit{mw}}(N')>0$ 
there is an $i\in N'$ such that
$|s_i\cap W|>0$;
it provides {\em proportional justified representation (PJR)} 
if for every $N'\subseteq N$ we have
$|(\cup_{i\in N'} s_i)\cap W|\ge \alpha^{\textit{mw}}(N')$, and
it provides {\em extended justified representation (EJR)} 
if for every $N'\subseteq N$ there is an $i\in N'$ such that
$|s_i\cap W|\ge \alpha^{\textit{mw}}(N')$.
\end{definition}

When extending these notions to the temporal setting, 
Bulteau et al.~\shortcite{bulteau2021jrperpetual} consider JR and PJR (but not EJR), 
each with three variants---prefixed with ``static'', 
``dynamic all-periods-intersection'', and ``dynamic some-periods-intersection''; where 
``dynamic some-periods intersection'' is the most demanding of these.
Chandak et al.~\shortcite{chandak23} extend this analysis to EJR, and focus on two variants
of the axioms: ``dynamic all-period intersection'', which they call weak
JR/PJR/EJR, and ``dynamic some-period intersection'', which they call  JR/PJR/EJR\footnote{The conference version of their paper uses ``JR/PJR/EJR'' for the weaker version and ``strong JR/PJR/EJR'' for the stronger version; we use the terminology from the arXiv version of their paper.}. In what follows, we use the terminology 
of Chandak et al.~\shortcite{chandak23}. 

Just as in the multiwinner setting, 
for each group of voters $N'$ we determine its demand $\alpha(N')$, 
which depends both on the size of $N'$ and on the degree of agreement among the group
members. The axioms then require that the collective satisfaction of group members
is commensurate with the group's demand.

\begin{definition}
    Given an election $E = (P, N, \ell, ({\mathbf s}_i)_{i\in N})$ and a group
    of voters $N'\subseteq N$, we define  the {\em agreement} of $N'$ 
    as the number of rounds in which all members of $N'$ approve a common candidate:
    $$
    \beta(N') = |\{t\in [\ell]: \bigcap_{i\in N'} s_{i, t}\neq\varnothing\}|.  
    $$
    We define the {\em demand} of a group of voters $N'$ as 
    $$
    \alpha(N') = \left\lfloor\beta(N')\cdot\frac{|N'|}{n}\right\rfloor.
    $$
\end{definition}
That is, if voters in $N'$ agree in $\beta$ rounds, 
they can demand a fraction of these rounds that is proportional to~$|N'|$.

We now proceed to define temporal extensions of JR, PJR, and EJR, starting with their stronger versions

\begin{definition}[Justified Representation]\label{def:jr}
   An outcome $\mathbf{o}$ provides \emph{justified representation (JR)} for
   a temporal election $E = (P, N, \ell, ({\mathbf s}_i)_{i\in N})$ if for every
   group of voters $N'\subseteq N$ with $\alpha(N')> 0$ 
   we have $\textit{sat}_i({\mathbf o})>0$ for some $i\in N'$.
\end{definition}

\begin{definition}[Proportional Justified Representation]\label{def:pjr}
   An outcome $\mathbf{o}$ provides \emph{proportional justified representation (PJR)}  for 
   a temporal election $E = (P, N, \ell, ({\mathbf s}_i)_{i\in N})$ if    for every group of voters $N'\subseteq N$  
   it holds that $|\{t\in [\ell]: o_t\in \bigcup_{i\in N'} s_{i,t}\}| \ge \alpha(N')$.
\end{definition}

\begin{definition}[Extended Justified Representation]\label{def:ejr}
   An outcome $\mathbf{o}$ provides \emph{extended justified representation (s-EJR)} for 
   a temporal election $E = (P, N, \ell, ({\mathbf s}_i)_{i\in N})$ if
   for every group of voters $N'\subseteq N$ there exists a voter $i\in N'$ with
   $\textit{sat}_i({\mathbf o})\geq \alpha(N')$.
\end{definition}

\noindent Note that EJR implies PJR, and PJR implies JR.

It is instructive to compare Definitions~\ref{def:jr}--\ref{def:ejr} to Definition~\ref{def:jr-static}. In particular, one may wonder why we did not define 
the demand of a group $N'$ in the temporal setting
as $\overline{\alpha}(N') = \min\{\beta(N'), \ell\cdot\frac{|N'|}{n}\}$, 
thereby decoupling the constraints on the size of the group and the level of agreement.
Note that $\beta(N')\le \ell$ for all $N'\subseteq N$ and hence 
$\overline{\alpha}(N')\ge \alpha(N')$ for all $N'$.
However, the following example shows that 
if we were to use this definition of demand, 
even the JR axiom would be impossible to satisfy
(whereas Chandak et al.~\shortcite{chandak23} show that every temporal election
admits an outcome that provides EJR); see also the discussion in Section~5
of the paper by Chandak et al.~\shortcite{chandak23}.

\begin{example}
Let $N=\{1, \dots, 6\}$, and let $\mathcal T$
be the set of all size-2 subsets of $N$ (so $|{\mathcal T}|=15$). 
Consider an election with voter set $N$, 
$P=\{x_T\}_{T\in{\mathcal T}}\cup\{y_j\}_{j=1, \dots, 6}$, and
$\ell=3$, where the voters' approval sets have the following structure:
$$
s_{i, 1}=\{x_T: i\in T\}, \quad s_{i, 2}=s_{i, 3} = \{y_i\} \quad\text{ for all $i\in N$}.
$$
For each group of voters $N'$ with $|N'|=2$ we have $\beta(N')=1$, 
as both voters in $N'$ approve $x_{N'}$ in the first round.
Moreover, $\ell\cdot\frac{|N'|}{n}=1$, so $\overline{\alpha}(N')=1$.
Hence, if we were to replace $\alpha(N')$ with $\overline{\alpha}(N')$ in the definition of JR, we would have to ensure that for every pair of voters $N'$ at least one voter in $N'$ obtains positive satisfaction. However, there is no outcome $\mathbf o$ that accomplishes this: $o_1$ is approved by at most two voters, and $o_2$ and $o_3$ are approved by at most one voter each, so for every outcome $\mathbf o$ there are two voters whose satisfaction is 0. That is, the modified definition
of JR (and hence PJR and EJR) is unsatisfiable.

On the other hand, we have $\beta(N')=3$ if $|N'|=1$, $\beta(N')=1$ if $|N'|=2$
and $\beta(N')=0$ if $|N'|\ge 3$, and hence $\alpha(N')=0$ for all $N'\subseteq N$.
Thus, every outcome $\mathbf o$ provides EJR (and hence PJR and JR).
\end{example}

The axioms introduced so far apply to groups of voters with positive agreement.
Bulteau et al.~\shortcite{bulteau2021jrperpetual} and Chandak et al.~\shortcite{chandak23} also consider weaker axioms, which only apply to groups of voters that agree in all rounds. 

\begin{definition}[Weak Justified Representation/Proportional Justified Representation/Extended Justified Representation]\label{def:weak-jr}
   Consider an outcome $\mathbf{o}$ for a temporal election $E = (P, N, \ell, ({\mathbf s}_i)_{i\in N})$. We say that $\mathbf o$ provides:
   \begin{itemize}
   \item
   \emph{weak justified representation} (w-JR) if 
   for every group of voters $N'\subseteq N$ with $\beta(N')=\ell$, $\alpha(N')>0$  
   it holds that $\textit{sat}_i({\mathbf o})>0$ for some $i\in N'$.
   \item 
   \emph{weak proportional justified representation} (w-PJR) if 
   for every group of voters $N'\subseteq N$ with $\beta(N')=\ell$ 
   it holds that $|\{t\in [\ell]: o_t\in \bigcup_{i\in N'} s_{i,t}\}| \ge \alpha(N')$.
   \item
   \emph{weak extended justified representation} (w-EJR) if 
   for every group of voters $N'\subseteq N$ with $\beta(N')=\ell$ 
   there exists a voter $i\in N'$ with
   $\textit{sat}_i({\mathbf o}) \geq \alpha(N')$.
   \end{itemize}
\end{definition}
By construction, the weak JR/PJR/EJR axioms are less demanding than their regular counterparts: e.g., if $\ell=10$ and all voters agree on a candidate in each of the first 9 
rounds, but each voter approves a different candidate in the last round, these
axioms are not binding. Indeed, they are somewhat easier to satisfy: e.g., 
Chandak et al.~\shortcite{chandak23} show that a natural adaptation of the 
Method of Equal Shares~\cite{peters2020proportionality} satisfies w-EJR, but not EJR. Therefore, in our work we consider both families of axioms.

\section{Hardness Proofs}\label{sec:hard}
In multiwinner elections, outcomes that provide
EJR (and thus JR and PJR) can be computed in polynomial time~\cite{aziz2018complexityepjr,peters2020proportionality}.
In contrast, the problem of verifying if a given outcome 
provides JR/PJR/EJR is considerably more challenging: while
this problem is polynomial-time solvable for JR, it is coNP-hard 
for PJR and EJR~\cite{aziz2017jr,aziz2018complexityepjr}.

In this section, we show that in the temporal setting the verification
problem is coNP-hard even for (w-)JR. These results extend to (w-)PJR and (w-)EJR.

We note that the hardness results for PJR and EJR in the multiwinner setting 
do not transfer immediately to the temporal setting. This is because the notion of agreement
in our model is fundamentally different from the one in the multiwinner model:
we focus on the number of rounds in which a group of voters agrees on a candidate, 
whereas in the multiwinner model the cohesiveness of a group is determined
by the number of candidates the group members agree on. Suppose that we naively transform 
a multiwinner instance $(P, N, (s_i)_{i\in N}, k)$, where $k$ is the number of candidates
to be elected, into a temporal instance with $k$ rounds 
where voters' approvals in each round are given by $(s_i)_{i\in N}$. Consider a group
of voters $N'$ that only agrees on a single candidate in the original instance (and therefore
can demand at most one spot in the elected committee). 
In the temporal setting, voters in $N'$
will agree on that candidate in all rounds, and therefore if $N'$ is large,  
its demand can be substantial.
Thus, we have to prove hardness of verifying PJR and EJR from scratch.
Fortunately, the proofs of Theorems~\ref{thm:np-hard-ap} and~\ref{thm:np-hard} 
apply to all three representation concepts that we consider.

\begin{theorem}\label{thm:np-hard-ap}
    For each of X $\in\{$w-JR, w-PJR, w-EJR$\}$, verifying whether an outcome provides X is {\em coNP}-complete.
    The hardness result holds for w-JR and w-PJR even if $|P|=3$, and for w-EJR even if $|P|=2$.
\end{theorem}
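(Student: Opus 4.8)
The plan is to establish coNP-completeness in three steps: a short membership argument, then two reductions from \textsc{Clique} — one giving hardness of w-JR and w-PJR already for $|P|=3$, and a more delicate one giving hardness of w-EJR for $|P|=2$.

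\textbf{Membership in coNP.} For each of the three axioms, a certificate that an outcome $\mathbf o$ does \emph{not} provide it is a group $N'\subseteq N$ with $\beta(N')=\ell$ that is not represented (no $i\in N'$ has $\textit{sat}_i(\mathbf o)>0$ for w-JR; coverage $|\{t: o_t\in\bigcup_{i\in N'}s_{i,t}\}|<\alpha(N')$ for w-PJR; $\textit{sat}_i(\mathbf o)<\alpha(N')$ for all $i\in N'$ for w-EJR). Given $N'$, one can compute $\beta(N')$, $\alpha(N')$, the coverage, and each $\textit{sat}_i(\mathbf o)$ in polynomial time, so a certificate is checkable in polynomial time; hence all three problems lie in coNP.

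\textbf{Hardness of w-JR and w-PJR, $|P|=3$.} I would reduce from \textsc{Clique}, exploiting that when every approval set lies in $\{\{a\},\{b\},\{a,b\}\}$, a group $N'$ has $\bigcap_{i\in N'}s_{i,t}\neq\varnothing$ iff $N'$ does not contain both a voter approving exactly $\{a\}$ and one approving exactly $\{b\}$ in round $t$ — so each round forbids a complete bipartite set of voter-pairs from co-occurring in any group with agreement $\ell$. Given $G=(V,E)$ and target $k$, take $P=\{a,b,c\}$, voters $V$ together with a set of dummies, and one round per non-edge $\{x,y\}$ of $G$ in which $x$ approves $\{a\}$, $y$ approves $\{b\}$, and every other voter approves $\{a,b\}$; then a set of real voters has agreement $\ell$ exactly when it induces a clique of $G$. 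Add one ``wall'' round (real voters approve $\{a\}$, dummies approve $\{b\}$) and $O(\log n)$ ``separating'' rounds (each dummy approves $\{a\}$ or $\{b\}$ by a fixed distinct binary code) so that no group with agreement $\ell$ mixes real voters with dummies or contains two dummies, plus filler rounds (everyone approves $\{a,b\}$) to tune $n$ and $\ell$ so that $\lceil n/\ell\rceil=k$. Finally set $o_t=c$ for every $t$: no voter ever approves $c$, so every $\textit{sat}_i(\mathbf o)$ is $0$ and every group's w-PJR coverage is $0$, whence for each of w-JR, w-PJR and w-EJR the outcome fails the axiom iff some group with agreement $\ell$ has positive demand, i.e., iff $G$ has a clique of size $\lceil n/\ell\rceil=k$. (This also re-proves hardness of w-EJR for $|P|=3$.)

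\textbf{Hardness of w-EJR, $|P|=2$.} Now a throwaway candidate is unavailable, so satisfactions cannot be driven to $0$ and must instead be pinned down exactly. Reduce again from \textsc{Clique}, first padding $G$ with isolated vertices — which for $k\ge 2$ leaves the answer unchanged — so that the number $\bar m$ of non-edges becomes large relative to $|V|$ and $k$. Take $P=\{a,b\}$, voters $V$, and for each non-edge $\{x,y\}$ create \emph{two} rounds, one with $x$ approving $\{b\}$ and $y$ approving $\{a\}$, one with the roles reversed, all other voters approving $\{a,b\}$; also add $\ell_0$ ``core'' rounds in which every voter approves $\{b\}$, and set $o_t=a$ everywhere. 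As before a group has agreement $\ell$ iff it induces a clique of $G$, core rounds leave everyone unsatisfied while trivially preserving common approval, and $\textit{sat}_v(\mathbf o)=2\bar m-d_G(v)$ where $d_G(v)$ is $v$'s number of non-neighbours. Choosing $\ell_0$ so that a $k$-clique $C$ has demand $\alpha(C)=\lfloor \ell\,k/|V|\rfloor\ge 2\bar m+1>\max_v\textit{sat}_v(\mathbf o)$ makes $C$ a w-EJR violation in the ``yes'' case; in the ``no'' case any clique $C$ with $|C|\le k-1$ contains a vertex $v$ with $d_G(v)\le |V|-|C|$, hence $\textit{sat}_v(\mathbf o)\ge 2\bar m-|V|+|C|$, and the same choice of $\ell_0$ (using that $\bar m$ dominates $|V|$ and $k$) forces this to be at least $\alpha(C)=\lfloor \ell\,|C|/|V|\rfloor$, so no group witnesses a violation.

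\textbf{Expected main obstacle.} Membership and the $|P|=3$ reduction are mostly bookkeeping (the slightly fiddly part there is engineering the dummies so they neither join real cliques nor form large agreeing groups). The real difficulty is the $|P|=2$ case for w-EJR: without a never-approved candidate, every round satisfies all but at most one member of a group with agreement $\ell$, so satisfactions are inherently close to $\ell$, and separating ``yes'' from ``no'' hinges on a careful joint choice of the number of core rounds $\ell_0$ and the amount of isolated-vertex padding. Verifying that a single $\ell_0$ simultaneously makes the demand of a $k$-clique exceed every member's satisfaction while leaving every smaller clique adequately represented is the crux of the argument.
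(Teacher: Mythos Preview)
Your proposal is correct in outline, and the overall architecture matches the paper's (coNP membership, one reduction for w-JR/w-PJR at $|P|=3$, a separate one for w-EJR at $|P|=2$), but both reductions differ in their internals.

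For $|P|=3$, the paper also reduces from \textsc{Clique} but uses one round \emph{per vertex}: in round $t$ voter $t$ approves only $p_2$, a voter adjacent to $v_t$ approves $\{p_1,p_2\}$, and a non-adjacent voter approves $\{p_1\}$; dummies always approve $p_3$, and the outcome is all-$p_3$. This makes $n=\kappa\nu$ and $\ell=\nu$ fall out directly, with no wall/separating/filler rounds or tuning of $\lceil n/\ell\rceil$. Your non-edge-rounds-plus-gadgets construction works too, but the paper's encoding is considerably cleaner and avoids the circular dependence between the dummy count and the number of code rounds that you would have to untangle.

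For $|P|=2$, the paper takes a genuinely different route: it reduces from \textsc{Independent Set} on graphs of maximum degree~$3$ rather than from \textsc{Clique}. There, groups with $\beta=\ell$ correspond to independent sets, and the degree bound pins every real voter's satisfaction to at least $\nu-3$, which together with the assumption $\kappa\le\nu-3$ yields the ``no''-case bound directly, without any padding or asymptotic parameter analysis. Your approach---two rounds per non-edge, core rounds, and padding with isolated vertices so that $\bar m$ dominates $|V|$ and $k$---is also sound (the feasibility window for $\ell_0$ does open once $\bar m$ is large enough, since the leading-order gap between the upper and lower constraints is $2\bar m|V|/[k(k-1)]$), and it has the virtue of reusing the same \textsc{Clique}/non-edge machinery as your first reduction. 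The trade-off is exactly what you flag: the paper gets a short, self-contained argument by exploiting bounded degree, whereas you need the careful joint calibration of padding and $\ell_0$ that you identify as the crux.
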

\begin{proof}
If an outcome $\mathbf o$ 
does not provide w-JR, this can be witnessed by a
group of voters $N'\subseteq N$ with $\beta(N')=\ell$, $\alpha(N')> 0$  
such that 
$\mathit{sat}_i({\mathbf o})=0$ for all $i\in N'$.
Hence, our problem is in coNP.
Similarly, for w-PJR (respectively, w-EJR) it suffices to exhibit a group of voters
$N'$ such that $\beta(N')=\ell$
and $|\{t\in [\ell]: o_t\in \cup_{i\in N'}s_{i, t}\}|< \alpha(N')$ (respectively,
$\beta(N')=\ell$, $\textit{sat}_i({\mathbf o})< \alpha(N')$ for all $i\in N'$).

To prove coNP-hardness for w-JR, we reduce from the NP-hard problem \textsc{Clique}.
An instance of \textsc{Clique} is given by a graph $G = (V,E)$ 
with vertex set $V$ and edge set $E$, together with a parameter $\kappa$;
it is a `yes'-instance if there exists a subset of vertices $V'\subseteq V$
of size $\kappa$ that forms a clique, i.e., 
for all $v, v'\in V'$ we have $\{v, v'\}\in E$.

Given an instance $(G, \kappa)$ of {\sc Clique} with $G=(V, E)$, 
where $V=\{v_1, \dots, v_\nu\}$,  
we construct an election with a set of voters $N = \{1, \dots, \nu\}\cup N_0$, 
where $|N_0|=(\kappa-1)\nu$, 
a set of candidates 
$P=\{p_1, p_2, p_3\}$, and $\ell = \nu$ rounds. 
For each $i \in [\nu]$, the preferences of voter $i$ are given by:
	\begin{equation*}
        s_{i,t}=
        \begin{cases} 
          \{p_2\} & \text{if } t=i; \\
          \{p_1,p_2\} & \text{if } \{v_i,v_t\} \in E; \\
          \{p_1\} & \text{otherwise}.
        \end{cases}
        \end{equation*}
        For each $i\in N_0$ and each $t\in [\ell]$ 
        we set $s_{i, t}=\{p_3\}$. 

    We claim that a set of voters $N'\subseteq [\nu]$ satisfies $\beta(N')=\ell$ if and only if
    the set $V'=\{v_i: i\in N'\}$ forms a clique in $G$. Indeed, let
    $V'$ be a clique in $G$, and consider a round $t\in [\ell]$. 
    If $t\in N'$, we have 
    $s_{t, t}=\{p_2\}$, $s_{i, t}=\{p_1, p_2\}$ for all $i\in N'\setminus\{t\}$, 
    i.e., $p_2\in \cap_{i\in N'}s_{i, t}$.
    On the other hand, if $t\not\in N'$, for each $i\in N'$ we have $s_{i, t}=\{p_1, p_2\}$
    if $\{v_i, v_t\}\in E$ and $s_{i, t}=\{p_1\}$ otherwise, 
    so $p_1\in \cap_{i\in N'}s_{i, t}$. Thus, for each $t\in [\ell]$ the set
    $\cap_{i\in N'}s_{i, t}$ is non-empty. Conversely, if
    $\{v_i, v_j\}\not\in E$ for some $i, j\in N'$, $i \neq j$, 
    then $s_{i, j}=\{p_1\}$, whereas $s_{j, j}=\{p_2\}$, 
    i.e., $\cap_{i'\in N'}s_{i', j}=\varnothing$ and hence $\beta(N')<\ell$.
   
    We claim that an outcome $\mathbf{o} = (p_3,\dots, p_3)$ 
 fails to provide w-JR
    if and only if $G$ contains a clique of size $\kappa$. 

    Indeed, let $V'$ be a size-$\kappa$ clique in $G$, and let $N'=\{i: v_i\in V'\}$.
    We have argued that $\beta(N')=\ell$ and hence 
    $\alpha(V')=\ell\cdot\frac{\kappa}{\kappa\cdot\nu}=1>0$; however,
    $\textit{sat}_i({\mathbf o})=0$ for each $i\in N'$.
    For the converse direction, suppose
    there exists a subset of voters $N'$ witnessing that $\mathbf o$ 
    fails to provide w-JR.
    As $\textit{sat}_i({\mathbf o})=0$ for each $i\in N'$, 
   we have  $N'\cap N_0=\varnothing$, i.e., $N'\subseteq [\nu]$. 
    Moreover, $\beta(N')= \ell$ and hence
    the set $V'=\{v_i: i\in N'\}$ forms a clique in $G$. 
    Then $\alpha(N')>0$ can be re-written as 
    $\ell\cdot\frac{|N'|}{\kappa\cdot \nu}\ge 1$, i.e., $|N'|\ge \kappa$. 
    As $|V'|=|N'|$, the set $V'$ is a clique of size at least $\kappa$ in~$G$.

    The argument extends easily to w-PJR/w-EJR: 
    if $G$ contains a clique of size $\kappa$, then, as argued above, 
    $\mathbf o$ fails to provide w-JR (and hence it fails to provide w-PJR/w-EJR), 
    whereas if there are no cliques of size $\kappa$ in $G$, there is no 
    group of voters $N'\subseteq N$ with $\beta(N')=\ell$, $\alpha(N')>0$, 
    so w-PJR/w-EJR is trivially satisfied.
     In the appendix, we provide a different proof for w-EJR, which applies even if $|P|=2$.
\end{proof}

For stronger versions of our axioms, we obtain a hardness result even for $|P|=2$.

\begin{theorem}\label{thm:np-hard}
	For each of X $\in\{$JR, PJR, EJR$\}$, 
	verifying whether an outcome provides X is {\em coNP}-complete.
        The hardness result holds even if $|P|=2$.
\end{theorem}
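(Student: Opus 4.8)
The plan is to first note coNP membership and then give a single reduction that simultaneously handles JR, PJR, and EJR. Membership is easy: if $\mathbf o$ fails to provide $X$, a polynomial-size certificate is a group $N'$, since given $N'$ one computes $\beta(N')$, $\alpha(N')$, each $\textit{sat}_i(\mathbf o)$, and the coverage $|\{t:o_t\in\bigcup_{i\in N'}s_{i,t}\}|$ in polynomial time and checks that the axiom's requirement for $N'$ is violated. For hardness I would reduce from \textsc{Maximum Edge Biclique}: given a bipartite graph $H=(U\cup W,F)$ and an integer $K$, decide whether there are nonempty $U'\subseteq U$, $W'\subseteq W$ with $U'\times W'\subseteq F$ and $|U'|\cdot|W'|\ge K$ --- equivalently, whether some nonempty $U'\subseteq U$ satisfies $|U'|\cdot\bigl|\bigcap_{u\in U'}N_H(u)\bigr|\ge K$. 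This problem is NP-complete (it is the natural setting for the ``number of rounds of common agreement times group size'' quantity that defines the demand $\alpha$).

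Given such an instance (we may assume $|U|\ge 1$ and $K\ge 1$, the remaining cases being trivial), I build a temporal election with $P=\{p_1,p_2\}$ as follows. For each $u\in U$ create a voter $z_u$; create $\ell=|U|\cdot|W|$ rounds, namely copies $r_w^{(1)},\dots,r_w^{(|U|)}$ of a round $r_w$ for every $w\in W$; and set $s_{z_u,\,r_w^{(j)}}=\{p_2\}$ if $(u,w)\in F$ and $s_{z_u,\,r_w^{(j)}}=\varnothing$ otherwise. In addition create $|U|\cdot(K-1)$ \emph{dummy} voters, each approving $\{p_1\}$ in every round, so that $n=|U|+|U|(K-1)=|U|\cdot K$. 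Let the outcome be $\mathbf o=(p_1,\dots,p_1)$; the construction clearly has size polynomial in $|H|+K$ and uses only two candidates.

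Each dummy has $\textit{sat}(\mathbf o)=\ell$ and approves $o_t$ in every round, while each $z_u$ has $\textit{sat}_{z_u}(\mathbf o)=0$ and never approves $o_t$. Hence for any group $N'$ containing a dummy the requirements of JR, PJR, and EJR are all satisfied trivially; and for $N'\subseteq\{z_u:u\in U\}$ the requirement of JR (some member has positive satisfaction), of PJR (since $\bigcup_{i\in N'}s_{i,t}\subseteq\{p_2\}$, the coverage is $0$, so PJR needs $\alpha(N')=0$), and of EJR (all members have satisfaction $0$, so EJR needs $\alpha(N')=0$) are all violated precisely when $\alpha(N')>0$. Writing $U'=\{u:z_u\in N'\}$, the round-copying yields $\beta(N')=|U|\cdot\bigl|\bigcap_{u\in U'}N_H(u)\bigr|$, so $\alpha(N')=\bigl\lfloor\beta(N')\,|N'|/n\bigr\rfloor>0$ if and only if $|U'|\cdot\bigl|\bigcap_{u\in U'}N_H(u)\bigr|\ge K$. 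Thus, for each $X\in\{$JR, PJR, EJR$\}$, the outcome $\mathbf o$ fails to provide $X$ iff $H$ contains a biclique with at least $K$ edges, giving coNP-hardness, and together with membership, coNP-completeness.

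The step I expect to be delicate is making the \emph{strong} demand $\alpha(N')=\lfloor\beta(N')\cdot|N'|/n\rfloor$ become positive at exactly the biclique threshold: this forces $n=K$, which is why the reduction needs both the dummy voters (to inflate $n$) and the $|U|$-fold copying of rounds (so that the achievable value $n=|U|\cdot K$ comes with a nonnegative dummy count even when $K<|U|$). One should also verify that no spurious small group forces a violation when $H$ is a no-instance --- e.g.\ a singleton $\{z_u\}$ has $\alpha(\{z_u\})=\lfloor\deg_H(u)/K\rfloor$, which is positive only when $\deg_H(u)\ge K$, but in that case $(\{u\},N_H(u))$ is already a biclique with at least $K$ edges, so this is consistent with $H$ being a yes-instance.
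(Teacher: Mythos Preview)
Your proposal is correct and follows essentially the same approach as the paper: both reduce from \textsc{Maximum Edge Biclique}, map one side of the bipartition to ``real'' voters who approve the non-selected candidate exactly in rounds corresponding to incident vertices on the other side (with empty approvals otherwise), add dummy voters who always approve the selected candidate, and argue that for all three axioms a violating coalition must avoid dummies and hence corresponds to a large biclique. The only difference is the bookkeeping used to make $\alpha(N')\ge 1$ line up with the biclique threshold: the paper first blows up the graph so that $\kappa>|L|+|R|$ and then takes $n=\kappa$, $\ell=|R|$, whereas you duplicate each round $|U|$ times and take $n=|U|\cdot K$, achieving the same cancellation $\alpha(N')=\lfloor |U'|\cdot|\bigcap N_H(u)|/K\rfloor$ without any preprocessing of the graph.
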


\section{Tractability Results for $\boldsymbol{|P|=2}$}\label{sec:m}
The hardness results in Theorems~\ref{thm:np-hard-ap} and~\ref{thm:np-hard}
hold even if the number of candidates $|P|$ is small (3 for w-JR/w-PJR and 2 for w-EJR/JR/PJR/EJR).
Now, for $|P|=1$ verification is clearly easy: there is only one possible
outcome, and for any group of voters $N'$ this outcome provides satisfaction
of at least $\beta(N')\ge \alpha(N')$ to all voters in $N'$.
To complete the complexity classification with respect to $|P|$,
it remains to consider the complexity of verifying w-JR/w-PJR when $|P|=2$.

\begin{proposition}\label{prop:P=2}
Given an election $(P, N, \ell, ({\mathbf s}_i)_{i\in N})$ with $|P|=2$
and an outcome $\mathbf o$, we can check in polynomial time whether 
$\mathbf o$ provides w-JR.
\end{proposition}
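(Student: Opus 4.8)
The plan is to reduce verifying w-JR to a single counting check. Write $P = \{p_1, p_2\}$ and let $\mathbf{o} = (o_1, \dots, o_\ell)$ be the given outcome. I first pin down which voters could ever belong to a group $N' \subseteq N$ witnessing a w-JR violation. Such a group must satisfy $\beta(N') = \ell$, $\alpha(N') > 0$, and $\textit{sat}_i(\mathbf{o}) = 0$ for every $i \in N'$. The last condition gives $o_t \notin s_{i,t}$ for all $t$, i.e.\ $s_{i,t} \subseteq P \setminus \{o_t\}$, and the first forbids any $i \in N'$ from having $s_{i,t} = \varnothing$ (an empty set in round $t$ would empty out $\bigcap_{j \in N'} s_{j,t}$ and drop $\beta(N')$ below $\ell$). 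Since $|P| = 2$, the set $P \setminus \{o_t\}$ is a singleton, so these two constraints force $s_{i,t} = P \setminus \{o_t\}$ for every $t$. Hence every witnessing group is contained in $Q := \{i \in N : s_{i,t} = P \setminus \{o_t\} \text{ for all } t \in [\ell]\}$, which is computable in $O(n\ell)$ time.

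It then remains to show that a witness exists precisely when $Q$ is large enough. All voters in $Q$ share the same temporal preference $(P \setminus \{o_1\}, \dots, P \setminus \{o_\ell\})$, all of whose components are nonempty; hence $\beta(N') = \ell$ for every $N' \subseteq Q$, and every $i \in Q$ has $\textit{sat}_i(\mathbf{o}) = 0$ by construction. So a w-JR violation exists iff some $N' \subseteq Q$ has $\alpha(N') > 0$; since $\alpha(N') = \lfloor \ell \cdot |N'| / n \rfloor$ for all $N' \subseteq Q$, this quantity is maximized at $N' = Q$, and $\alpha(Q) > 0$ is equivalent to $\ell \cdot |Q| \geq n$. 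The algorithm thus computes $Q$ and declares that $\mathbf{o}$ provides w-JR iff $\ell \cdot |Q| < n$; both directions of correctness follow from the structural characterization above, and the running time is clearly polynomial.

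I do not anticipate a serious obstacle here; the one point that needs care is the interplay between the requirement $\beta(N') = \ell$ and empty approval sets, which is exactly what makes the approval sets of any potential witness completely rigid once $|P| = 2$, collapsing a search over exponentially many candidate groups to a single count.
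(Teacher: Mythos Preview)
Your proof is correct and essentially identical to the paper's: your set $Q$ is precisely the paper's set $G$ of ``grumpy'' voters (those with $|s_{i,t}|=1$ and $o_t\notin s_{i,t}$ for all $t$), and both arguments show that any witnessing group is contained in this set and that it suffices to check whether $\alpha(Q)>0$.
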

\begin{proof}
Assume that $P=\{p, q\}$.
 Suppose that a group of voters $N'$ witnesses that $\mathbf o$ fails
  to provide w-JR. Then $\mathit{sat}_i({\mathbf o})=0$ for each $i\in N'$, but $\beta(N')=\ell$. 
  
We say that a voter $i$ is {\em grumpy}
if in each round $t$ she approves a single candidate, and this candidate is not $o_t$, i.e., $|s_{i, t}|=1$ and $o_t\not\in s_{i, t}$ for all $t\in [\ell]$. Let $G$ be the set of all grumpy voters, and note that $\beta(G)=\ell$: in each round $t$, all grumpy voters approve the (unique) candidate in 
$P\setminus\{o_t\}$.

Note that $N'\subseteq G$: indeed, 
if a voter $i$ is not grumpy, either she approves $o_t$ for some $t\in [\ell]$, so $\mathit{sat}_i({\mathbf o})>0$, 
or she has $s_{i, t}=\varnothing$
for some $t\in [\ell]$, which is not compatible with $\beta(N')=\ell$.
    Hence, to check whether $\mathbf o$ provides w-JR 
    it suffices to verify whether 
    $\alpha(G)=\lfloor\ell\cdot\frac{|G|}{|N|}\rfloor = 0$.
\end{proof}
It is not clear if Proposition~\ref{prop:P=2} can be extended to w-PJR:
to verify w-PJR, it is no longer sufficient to focus
on grumpy voters. We conjecture that verifying w-PJR remains
hard even for $|P|=2$
(recall that, by Theorem~\ref{thm:np-hard-ap}, this is the case for w-EJR).

Observe that for the election constructed in the proof of Theorem~\ref{thm:np-hard-ap} 
we have $s_{i, t}\neq\varnothing$ for all $i\in N$, $t\in [\ell]$. In contrast, 
in the election constructed in the proof of Theorem~\ref{thm:np-hard}  
the approval sets $s_{i, t}$ may be empty. It turns out that 
if we require that $s_{i, t}\neq\varnothing$ for all $i\in N$, $t\in [\ell]$ then for $|P|=2$ the problem of verifying JR becomes easy; under this assumption, if $\textit{sat}_i({\mathbf o})=0$ for all $i\in N'$ then all voters in $N'$ are grumpy, so it suffices to check if $\alpha(G)=0$.

\begin{proposition}\label{prop:nonempty}
Given an election $(P, N, \ell, ({\mathbf s}_i)_{i\in N})$ with $|P|=2$
and $s_{i, t}\neq\varnothing$ for all $i\in N$, $t\in [\ell]$ 
and an outcome $\mathbf o$, 
we can check in polynomial time whether $\mathbf o$ provides JR.  
\end{proposition}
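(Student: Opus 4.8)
The plan is to show that, under the nonemptiness assumption, any group $N'$ witnessing a failure of JR must consist entirely of grumpy voters (in the sense defined in the proof of Proposition~\ref{prop:P=2}), and that among all such groups the demand $\alpha$ is maximized by taking $N'$ to be the full set $G$ of grumpy voters; hence it suffices to compute $G$ and check whether $\alpha(G)=0$.

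First I would recall that a failure of JR is witnessed by a group $N'$ with $\alpha(N')>0$ and $\textit{sat}_i(\mathbf o)=0$ for all $i\in N'$. Fix such an $N'$, write $P=\{p,q\}$, and consider any round $t$. For each $i\in N'$, since $\textit{sat}_i(\mathbf o)=0$ we have $o_t\notin s_{i,t}$; combined with $s_{i,t}\neq\varnothing$ and $|P|=2$ this forces $s_{i,t}=P\setminus\{o_t\}$, a singleton. Thus every $i\in N'$ is grumpy, i.e.\ $N'\subseteq G$. This is the step where both hypotheses ($|P|=2$ and nonemptiness) are essential, and it is exactly the place where the reduction behind Theorem~\ref{thm:np-hard} escapes: there the approval sets are allowed to be empty, so a voter with $\textit{sat}_i(\mathbf o)=0$ need not be grumpy.

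Next comes the key observation that kills the usual difficulty. Every grumpy voter approves precisely the single candidate $P\setminus\{o_t\}$ in round~$t$, so for every $\varnothing\neq N'\subseteq G$ and every round $t$ we have $\bigcap_{i\in N'}s_{i,t}\supseteq\{P\setminus\{o_t\}\}\neq\varnothing$, hence $\beta(N')=\ell$. Therefore $\alpha(N')=\lfloor \ell\cdot |N'|/n\rfloor$ for all such $N'$, which is nondecreasing in $|N'|$, so $\max_{N'\subseteq G}\alpha(N')=\alpha(G)$. Combining with the previous paragraph, $\mathbf o$ fails to provide JR if and only if $\alpha(G)>0$. The algorithm is then immediate: test each voter for grumpiness in $O(\ell)$ time to obtain $G$, compute $\alpha(G)=\lfloor \ell\cdot|G|/n\rfloor$, and report that $\mathbf o$ provides JR exactly when this value equals $0$.

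There is essentially no serious obstacle once the grumpiness characterization is in place; the only subtlety worth stating carefully is that the non-monotonicity of $\alpha$, which is what makes JR-verification delicate in general (and coNP-hard without the nonemptiness restriction), becomes harmless here because $\beta(\cdot)$ is pinned to $\ell$ on the whole Boolean lattice of subsets of $G$.
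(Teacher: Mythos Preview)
Your proof is correct and follows essentially the same approach as the paper: you show that under the nonemptiness assumption every zero-satisfaction voter is grumpy, observe that $\beta(N')=\ell$ for all nonempty $N'\subseteq G$, and conclude that checking $\alpha(G)=0$ suffices. Your write-up is somewhat more explicit than the paper's (in particular, you spell out the monotonicity of $\alpha$ on subsets of $G$ and the biconditional), but the underlying argument is identical.
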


Again, it is not clear if Proposition~\ref{prop:nonempty} extends to (w-)PJR and (w-)EJR; we conjecture that the answer is `no'.

\begin{remark}
    The proofs of
    Propositions~\ref{prop:P=2} and~\ref{prop:nonempty} go through if, instead of requiring $|P|=2$, we require that in each round there are at most two candidates that receive approvals from the voters, i.e.,   $|\cup_i s_{i, t}|\le 2$ for each $t\in [\ell]$. 
\end{remark}
    
Proposition~\ref{prop:nonempty} shows that, for $|P|=2$, requiring each voter to approve at least one 
candidate in each round reduces the complexity of checking JR considerably. However, this requirement only helps if $|P|$ is bounded.
Indeed, we can modify the construction 
in the proof of Theorem~\ref{thm:np-hard} by creating a set of additional candidates $P^N = \{p^i\}_{i\in N}$ and modifying the preferences so that whenever $s_{i, t}=\varnothing$ in the original construction, we set $s_{i, t}=\{p^i\}$. The new candidates 
do not change the agreement of any voter group, so the rest of the proof goes through unchanged.

\begin{proposition}\label{prop:verify_coNPhard_nonempty}
	For each of X $\in\{$JR, PJR, EJR$\}$,
	verifying whether an outcome provides X is {\em coNP}-complete,
        even if $s_{i, t}\neq\varnothing$
	for all voters $i\in N$ and all rounds $t\in [\ell]$.
\end{proposition}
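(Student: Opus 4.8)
The plan is to directly adapt the hardness construction from Theorem~\ref{thm:np-hard}, whose statement and (presumably forthcoming) proof reduce from a suitable NP-hard problem to produce a temporal election with $|P|=2$ together with a target outcome $\mathbf o$ such that $\mathbf o$ fails (w-)JR/PJR/EJR precisely when the source instance is a ``yes''-instance. The only feature of that construction we need to repair is that some approval sets $s_{i,t}$ are empty; everything else about the reduction should be preserved verbatim.

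\paragraph{The fix.} First I would introduce a fresh private candidate $p^i$ for each voter $i\in N$, setting $P^N=\{p^i\}_{i\in N}$ and replacing $P$ with $P\cup P^N$ in the instance from Theorem~\ref{thm:np-hard}. For every voter $i$ and every round $t$ with $s_{i,t}=\varnothing$ in the original construction, I would redefine $s_{i,t}=\{p^i\}$; all other approval sets, the number of rounds $\ell$, and the outcome $\mathbf o$ are left untouched (note $o_t\notin P^N$ for all $t$, since $\mathbf o$ only uses the original two candidates). The key observation is that these new candidates are useless for forming agreement among groups of size at least two: for any $N'$ with $|N'|\ge 2$ and any round $t$, the sets $\bigcap_{i\in N'}s_{i,t}$ computed in the new instance and in the original instance coincide, because $p^i$ appears only in voter $i$'s ballot, so no $p^i$ can ever lie in an intersection over two or more distinct voters. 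Hence $\beta(N')$, and therefore $\alpha(N')$, is unchanged for every group $N'$ with $|N'|\ge 2$. Singleton groups $\{i\}$ may see their agreement increase, but singletons are irrelevant to all six axioms: for JR/w-JR a singleton with $\alpha(\{i\})>0$ just needs $\textit{sat}_i(\mathbf o)>0$, which was already guaranteed in the original construction whenever it mattered (and in any case the private rounds contribute $p^i\ne o_t$, so satisfaction from those rounds is zero, exactly as before); similarly for PJR/EJR and their weak variants.

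\paragraph{Correctness.} With $\beta$ and $\alpha$ preserved on all groups of size $\ge 2$ and the satisfactions $\textit{sat}_i(\mathbf o)$ preserved for every voter (the private rounds add nothing since $o_t\ne p^i$), the set of potential ``violating'' groups for any of X $\in\{$JR, PJR, EJR$\}$ — and likewise for the weak versions if one wants to cover those too — is identical in the new instance and the old one, except possibly for singletons, which are never binding in the original reduction. Therefore $\mathbf o$ fails X in the new instance if and only if it fails X in the original instance, which by Theorem~\ref{thm:np-hard} happens if and only if the source instance is a ``yes''-instance. Membership in coNP is inherited from Theorem~\ref{thm:np-hard}: a violating group $N'$ is a polynomial-size certificate, and checking $\beta(N')$, $\alpha(N')$, and each $\textit{sat}_i(\mathbf o)$ is polynomial. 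Combined with the NP-hardness of the source problem, this gives coNP-completeness, now with the additional promise that $s_{i,t}\ne\varnothing$ for all $i$ and $t$.

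\paragraph{Main obstacle.} There is no serious technical obstacle here; the construction is a routine padding argument. The one point that requires care in the write-up is making precise why singleton groups can be ignored: one must confirm, from the details of the construction in Theorem~\ref{thm:np-hard}, that no singleton group ever witnesses a violation there (equivalently, that each individual voter already receives enough satisfaction, or has demand zero, in the original instance), so that the possible increase of $\beta(\{i\})$ caused by the private candidate $p^i$ — which raises $\alpha(\{i\})$ up to at most $\lfloor \ell/n\rfloor$ — is harmless. Since the private rounds contribute $p^i\neq o_t$ to $\mathbf o$, the natural way to handle this is to note that if $\alpha(\{i\})>0$ were to become binding for JR, the original instance would already have had $s_{i,t}\neq\varnothing$ in enough rounds for voter $i$ to be satisfied there, or else $i$'s demand was zero to begin with; an explicit check against the Theorem~\ref{thm:np-hard} construction closes this gap, and for PJR/EJR the singleton condition is even weaker.
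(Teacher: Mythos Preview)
Your approach is essentially identical to the paper's: add a private candidate $p^i$ for each voter and replace every empty approval set $s_{i,t}=\varnothing$ by $\{p^i\}$, then observe that agreement is unaffected for groups of size at least two. You are actually more careful than the paper, which flatly asserts that ``the new candidates do not change the agreement of any voter group''; you correctly note that singleton agreement can increase and that this needs checking. The check is immediate from the construction in Theorem~\ref{thm:np-hard}: there $n=|N|=\kappa$ and $\ell=\lambda$, and the reduction assumes $\kappa>\nu+\lambda>\ell$, so $\alpha(\{i\})\le\lfloor\ell/n\rfloor=0$ for every singleton even after the fix, closing the gap you flagged. (Your remark that ``for PJR/EJR the singleton condition is even weaker'' is not quite right---EJR on a singleton demands $\textit{sat}_i(\mathbf o)\ge\alpha(\{i\})$, which is at least as strong as JR---but since $\alpha(\{i\})=0$ this is moot.)
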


It remains an open question whether 
verifying that a given outcome provides JR/PJR/EJR remains coNP-complete if all approval sets are non-empty and 
the size of the set $P$ is a fixed constant greater than $2$;
we conjecture that this is indeed the case, 
but we were unable to prove this.

\section{Parameterized Complexity} \label{sec:parameterized}
We have seen that the verification problem becomes easier if the
size of the candidate set $m=|P|$ is very small. We will now consider
other natural parameters of our problem, such as the number
of voters $n$ and the number of rounds $\ell$, and explore the complexity of the verification problem with respect to there parameters.

\begin{proposition}\label{prop:FPT_n}
    For each X $\in \{$(w-)JR, (w-)PJR, (w-)EJR$\}$, checking if an outcome provides X is FPT with respect to $n$.
\end{proposition}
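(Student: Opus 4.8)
The plan is a direct brute-force enumeration over subsets of voters, exploiting that $n$ (hence $2^n$) is the parameter. First I would observe that each of the six axioms is a statement of the form ``for every group $N'\subseteq N$, some polynomial-time checkable condition on $N'$ holds''. Consequently, if an outcome $\mathbf o$ fails to provide the axiom in question, the failure is witnessed by a single group $N'\subseteq N$: for JR/w-JR (Definitions~\ref{def:jr} and~\ref{def:weak-jr}) a group with $\alpha(N')>0$ --- and, in the weak case, $\beta(N')=\ell$ --- such that $\textit{sat}_i(\mathbf o)=0$ for all $i\in N'$; for PJR/w-PJR a group with $|\{t\in[\ell]: o_t\in\bigcup_{i\in N'}s_{i,t}\}|<\alpha(N')$ (and $\beta(N')=\ell$ in the weak case); and for EJR/w-EJR a group with $\textit{sat}_i(\mathbf o)<\alpha(N')$ for all $i\in N'$ (and $\beta(N')=\ell$ in the weak case). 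Conversely, the existence of such a group immediately refutes the axiom, so it suffices to decide whether any of the at most $2^n$ subsets of $N$ is a witness.

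Second, I would argue that testing a fixed $N'$ takes time polynomial in the input size. We precompute $\textit{sat}_i(\mathbf o)=|\{t\in[\ell]: o_t\in s_{i,t}\}|$ for every $i\in N$ once, in total time $O(n\ell)$. Given $N'$, we compute $\beta(N')$ by checking, for each of the $\ell$ rounds, whether $\bigcap_{i\in N'}s_{i,t}\neq\varnothing$ --- an $O(nm)$ operation per round --- which also yields $\alpha(N')=\lfloor\beta(N')\cdot|N'|/n\rfloor$; the set $\{t\in[\ell]: o_t\in\bigcup_{i\in N'}s_{i,t}\}$ and the quantity $\min_{i\in N'}\textit{sat}_i(\mathbf o)$ are then read off in $O(n\ell)$ time. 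Comparing these against $\alpha(N')$ (and, for the weak axioms, also checking $\beta(N')=\ell$) settles whether $N'$ is a witness, in time $O(nm\ell)$ overall.

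Finally, iterating over all subsets $N'\subseteq N$ and answering ``$\mathbf o$ does not provide X'' as soon as a witness is found, and ``$\mathbf o$ provides X'' if none is, gives total running time $O(2^n\cdot nm\ell)$, which is of the form $f(n)\cdot|I|^{O(1)}$ and hence FPT with respect to $n$. There is essentially no obstacle beyond the bookkeeping of verifying that a violation of each of the six axioms is certified by a single voter group and that each per-group check is polynomial; one could shrink the search space (e.g.\ for (w-)JR it suffices to enumerate subsets of $\{i\in N:\textit{sat}_i(\mathbf o)=0\}$), but this is not needed for the claimed bound.
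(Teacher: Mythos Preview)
Your proposal is correct and follows essentially the same approach as the paper: enumerate all $2^n$ subsets $N'\subseteq N$, compute $\beta(N')$ and $\alpha(N')$ for each, and check the relevant violation condition in polynomial time. Your write-up is somewhat more detailed in the per-subset cost analysis, but the underlying argument is identical.
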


The proof of the following proposition, as well as the results in Section~\ref{sec:mon}, 
make use of the following observation.

\begin{observation}\label{obs:compare}
    Let $X$ be a multiset of real numbers, and let $r\le |X|$ be a positive integer. Let $Y$ and $Z$ be two subsets of $X$  with $|Y|=|Z|=r$ such that 
    $Z$ consists of the $r$ smallest elements of $X$ (for some way of breaking ties). 
    Then for each $z\in Z$ we have $z\le \max_{y\in Y}y$.
\end{observation}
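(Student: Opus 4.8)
The plan is to reduce the statement to a one-line pigeonhole argument once the tie-breaking is made explicit. First I would remove the ambiguity in ``the $r$ smallest elements'': treat $X$ as indexed by $\{1,\dots,|X|\}$, so that the sub-multisets $Y$ and $Z$ are genuine subsets of the index set, and extend the tie-breaking rule referred to in the statement to a linear order on these indices that refines the order by value. This lets me list the elements as $x_{(1)},\dots,x_{(|X|)}$ with $x_{(1)}\le x_{(2)}\le\dots\le x_{(|X|)}$, where this is precisely the order according to which $Z$ is selected. Then $Z=\{x_{(1)},\dots,x_{(r)}\}$, and in particular $\max_{z\in Z}z=x_{(r)}$.

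Next I would bound $\max_{y\in Y}y$ from below. Since $|Y|=r$ while the initial segment $\{x_{(1)},\dots,x_{(r-1)}\}$ contains only $r-1$ indices, $Y$ cannot be contained in that segment; hence $Y$ contains some $x_{(j)}$ with $j\ge r$, and therefore $\max_{y\in Y}y\ge x_{(j)}\ge x_{(r)}$.

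Combining the two facts gives $\max_{y\in Y}y\ge x_{(r)}=\max_{z\in Z}z\ge z$ for every $z\in Z$, which is exactly the claim. I do not expect a genuine obstacle here: the only point that needs any care is pinning down ``for some way of breaking ties'' as an actual refining linear order, so that both ``the $r$ smallest elements'' and the pigeonhole count on $r$-element subsets are unambiguous; once that is in place the argument is immediate.
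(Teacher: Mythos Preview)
Your argument is correct. The paper states this as an observation without proof, so there is nothing to compare against; your pigeonhole reasoning via the order statistics is the standard way to make the claim precise.
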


\begin{proposition}\label{prop:m-ell}
     For each X $\in \{$(w-)JR, (w-)PJR, (w-)EJR$\}$, 
    checking if an outcome provides X is FPT with respect to the combined parameter $(m,\ell)$ and XP with respect to $\ell$.
\end{proposition}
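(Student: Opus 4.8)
The plan is to exploit the fact that when $\ell$ is small, every group of voters that witnesses a violation can be summarised by a short \emph{signature}. A signature consists of a set $R\subseteq[\ell]$ of intended ``agreement rounds'' together with a map $w\colon R\to P$ specifying, for each $t\in R$, a candidate that all members of the group approve in round $t$; for (w-)PJR I additionally append a set $D\subseteq[\ell]$ of ``forbidden'' rounds, meaning no group member may approve $o_t$ in any $t\in D$. There are at most $(m+1)^\ell$ signatures (times an extra $2^\ell$ for PJR). Since this count depends only on $m$ and $\ell$, and since $m$ is at most the input size (so $(m+1)^\ell$ is bounded by the input size raised to the power $\ell$), a procedure doing polynomial work per signature runs in time $f(m,\ell)$ times a polynomial and in time ``input size to the power $O(\ell)$'', which gives the FPT claim in $(m,\ell)$ and the XP claim in $\ell$ respectively. (Alternatively, for the FPT claim one may bucket voters into the at most $2^{m\ell}$ distinct temporal preferences and enumerate sets of preference types: for PJR/EJR, replacing a group by the set of \emph{all} voters realising its set of types never decreases $\alpha$ and never increases the relevant cost, so this also works; I will describe the signature-based procedure, which covers both claims at once.)

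Fix a signature and let $A_{R,w}=\{i\in N : w(t)\in s_{i,t}\text{ for all }t\in R\}$; note that every subgroup of $A_{R,w}$ has agreement at least $|R|$, and every subgroup of $A_{[\ell],w}$ has agreement exactly $\ell$. For each axiom I would choose a candidate group $N'$ and then, using the \emph{true} values $\beta(N')$, $\textit{sat}_i(\mathbf o)$, and the coverage $\textit{cov}(N')=|\{t : o_t\in\bigcup_{i\in N'}s_{i,t}\}|$ (all polynomial-time computable), report a violation if the corresponding inequality fails. For (w-)JR: take $N'=\{i\in A_{R,w}:\textit{sat}_i(\mathbf o)=0\}$, and report a violation if $\alpha(N')>0$. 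For (w-)PJR: take $N'=\{i\in A_{R,w}: o_t\notin s_{i,t}\text{ for all }t\in D\}$, and report a violation if $\textit{cov}(N')<\alpha(N')$; here the $D$-constraint forces $\textit{cov}(N')\le\ell-|D|$. For (w-)EJR: sort $A_{R,w}$ by satisfaction and, for each prefix $N'$ of the sorted list, report a violation if $\max_{i\in N'}\textit{sat}_i(\mathbf o)<\alpha(N')$. In the weak variants one restricts further to $R=[\ell]$, so $\beta(N')=\ell$ holds automatically and the $\beta$-condition in the axiom is met. Soundness is immediate, since in every case the reported $N'$ literally violates the axiom for the values actually computed.

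For completeness, given any witnessing group $N^*$ I would let $R^*$ be its \emph{exact} set of agreement rounds (so $|R^*|=\beta(N^*)$), pick $w^*(t)\in\bigcap_{i\in N^*}s_{i,t}$ for $t\in R^*$, and — for PJR — let $D^*$ be the set of rounds that $N^*$ fails to cover. Then $N^*$ is contained in the set $N'$ that the algorithm builds for this signature, and one verifies $\beta(N')\ge|R^*|=\beta(N^*)$ and $|N'|\ge|N^*|$, hence $\alpha(N')\ge\alpha(N^*)$, while the cost does not increase: the total satisfaction of $N'$ is still $0$ (JR), or $\textit{cov}(N')\le\ell-|D^*|=\textit{cov}(N^*)$ (PJR). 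The EJR case is exactly where Observation~\ref{obs:compare} enters: the length-$|N^*|$ prefix $N'$ of the satisfaction-sorted $A_{R^*,w^*}$ consists of the $|N^*|$ voters of smallest satisfaction in $A_{R^*,w^*}$, so by the observation $\max_{i\in N'}\textit{sat}_i(\mathbf o)\le\max_{i\in N^*}\textit{sat}_i(\mathbf o)<\alpha(N^*)\le\alpha(N')$, and this prefix is one of the groups inspected. I expect the substantive design choices — rather than real difficulties — to be (i) building the forbidden-rounds set $D$ into the signature for PJR, without which coverage cannot be controlled, and (ii) invoking Observation~\ref{obs:compare} for EJR so that it suffices to try prefixes of the sorted list rather than all sub-collections of voters; the remaining bookkeeping (internal consistency of a signature, e.g.\ $w^*(t)\neq o_t$ for $t\in R^*\cap D^*$, which holds because such a round would otherwise be covered by $N^*$, together with the monotonicity of $\alpha$ and of the cost under enlarging $N'$) is routine.
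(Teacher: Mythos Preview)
Your proof is correct and, for (w-)JR and (w-)EJR, is essentially the paper's argument in different packaging: your signature $(R,w)$ is the paper's pair $(T,\mathbf{o}')$ restricted to $T$, your set $A_{R,w}$ is the paper's $N_{\mathbf{o}',T}$, and the satisfaction-sorted prefixes together with Observation~\ref{obs:compare} are used identically for EJR.

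The genuine difference is your treatment of (w-)PJR. The paper reuses the satisfaction-sorted prefixes $N^r_{\mathbf{o}',T}$ and simply asserts that ``the rest of the argument goes through without change''. But the EJR completeness step relies on the fact that replacing $N^*$ by the $|N^*|$ lowest-satisfaction voters in $N_{\mathbf{o}',T'}$ cannot increase the \emph{maximum} individual satisfaction; for PJR one would need that this replacement cannot increase the \emph{coverage} $|\{t:o_t\in\bigcup_i s_{i,t}\}|$, and there is no reason the lowest-satisfaction voters should have low joint coverage (two low-satisfaction voters can cover disjoint rounds while two higher-satisfaction voters cover the same rounds). You sidestep this by enlarging the signature with the forbidden-round set $D$ and taking $N'=\{i\in A_{R,w}:o_t\notin s_{i,t}\text{ for all }t\in D\}$; then $\textit{cov}(N')\le \ell-|D|$ by construction, and choosing $D^*$ to be the rounds $N^*$ fails to cover yields $N^*\subseteq N'$ with $\textit{cov}(N')\le\textit{cov}(N^*)$ and $\alpha(N')\ge\alpha(N^*)$ directly. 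This costs only an extra factor of $2^\ell$ in the enumeration, so the FPT/XP bounds are unchanged, and it makes the PJR completeness argument fully transparent where the paper's is not.
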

\begin{proof}
    Given an outcome $\mathbf o$, we proceed as follows.
	For each possible outcome ${\mathbf o}'$ and each subset $T\subseteq [\ell]$,
    compute the set 
        $N_{{\mathbf o}', T}=\{i\in N: o'_t\in s_{i, t}\text{ for all $t\in T$}\}$;
        this set consists of all voters who approve the outcome ${\mathbf o}'$ in
        each of the rounds in $T$. We then sort the voters in $N_{{\mathbf o}', T}$
	according to their satisfaction under $\mathbf o$ in non-decreasing order, 
	breaking ties lexicographically; 
 for each $r=1, \dots, |N_{{\mathbf o}', T}|$,
	let $N^r_{{\mathbf o}', T}$ 
	denote the set that consists of the first $r$ voters in this order. 

    To verify EJR, for each $r=1, \dots, |N_{{\mathbf o}', T}|$ we determine whether 
        there is an $i\in N^r_{{\mathbf o}', T}$ such that $\textit{sat}_i({\mathbf o})\ge \lfloor \frac{r}{n}\cdot |T| \rfloor$.
	We claim that $\mathbf o$ provides EJR if and only if it passes these checks 
        for all possible choices of ${\mathbf o}'$, $T$ and $r$. 

	Indeed, suppose $\mathbf o$ fails this check for some ${\mathbf o}'$, $T$ and $r$,
	and let  $N' = N^r_{{\mathbf o}', T}$. By construction, $\beta(N')\ge |T|$ and $|N'|=r$, so
	$\alpha(N')\ge \lfloor \frac{r}{n} \cdot |T| \rfloor$, i.e., $N'$ is a witness 
    that EJR is violated.
 
	Conversely, suppose that $\mathbf o$ fails EJR. 
    Then there is a set of voters $N'$ such that 
	$\max_{i\in N'}\textit{sat}_i({\mathbf o})<\alpha(N')$.
    By definition of $\beta(N')$, there exists a set $T'\subseteq [\ell]$, $|T'|=\beta(N')$,
	and an outcome ${\mathbf o}'$ such that for each $t\in T'$ and each $i\in N'$ it holds that
	$o'_t\in s_{i, t}$; note that this implies
	$N'\subseteq N_{{\mathbf o}', T'}$. Let $r=|N'|$, and consider the set
    $N_{{\mathbf o}', T'}^r$. Observe that $\beta(N_{{\mathbf o}', T'}^r)\ge |T'|=\beta(N')$ and $|N_{{\mathbf o}', T'}^r|=|N'|=r$, so 
    $\alpha(N_{{\mathbf o}', T'}^r)\ge \alpha(N')$.    
    On the other hand, by applying Observation~\ref{obs:compare}
    to the multisets $\{\mathit{sat}_i({\mathbf o}): i\in N'\}$ and
    $\{\mathit{sat}_i({\mathbf o}): i\in N_{{\mathbf o}', T'}^r\}$,  
    we conclude that for each $i\in N_{{\mathbf o}', T'}^r$ it holds that
    $\textit{sat}_i({\mathbf o})\le \max_{i\in N'} \textit{sat}_i({\mathbf o}) < \alpha(N')\le \alpha(N_{{\mathbf o}', T'}^r)$.
    That is, $N_{{\mathbf o}', T'}^r$ is also a witness that $\mathbf o$ fails to provide EJR. As our algorithm checks $N_{{\mathbf o}', T'}^r$, 
it will be able to detect that $\mathbf o$ fails to provide EJR.
 
    For PJR, instead of checking whether $\textit{sat}_i({\mathbf o})\ge \lfloor \frac{r}{n}\cdot |T| \rfloor$ for some $i\in N^r_{{\mathbf o}', T}$, we check that
    $|\{t\in [\ell]: o_t\in \cup_{i\in N^r_{{\mathbf o}', T}}s_{i, t}\}|\ge \lfloor \frac{r}{n}\cdot |T| \rfloor$, and for JR we check whether $\lfloor \frac{r}{n}\cdot |T| \rfloor \ge 1$ implies
    that $\textit{sat}_i({\mathbf o})\ge 1$ for some $i\in N^r_{{\mathbf o}', T}$.
    For w-JR, w-PJR and w-EJR, the algorithm can be simplified: instead of iterating through
    all $T\subseteq [\ell]$, it suffices to consider $T=[\ell]$. The rest of the argument goes through without change.
    
	The bound on the runtime follows, as we have $m^\ell$ possibilities
	for ${\mathbf o}'$ and $2^\ell$ possibilities for $T$, and we process
	each pair $({\mathbf o}', T)$ in polynomial time.
\end{proof}
Our  XP result with respect to $\ell$ is tight, at least for weak versions of the axioms: our next theorem shows that checking whether an outcome provides w-JR, w-PJR, or w-EJR is W[1]-hard with respect to $\ell$. This indicates that an FPT (in $\ell$) algorithm does not exist unless FPT = W[1].
\begin{theorem}\label{thm:W1-hard}
    For each of X $\in\{$w-JR, w-PJR, w-EJR$\}$, verifying whether an outcome provides X is {\em $W[1]$}-hard with respect to~$\ell$.
\end{theorem}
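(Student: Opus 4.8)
The plan is to establish $W[1]$-hardness via a parameterized reduction from the $W[1]$-complete problem \textsc{Clique}, mapping its parameter $k$ (the clique size) onto the number of rounds $\ell$. Membership in coNP is immediate (a violating group of voters is a polynomial certificate, exactly as noted for Theorem~\ref{thm:np-hard-ap}), and, as in that theorem, I would use a single construction to cover all three of w-JR, w-PJR and w-EJR.

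From an instance $(G,k)$ of \textsc{Clique} with $G=(V,E)$ --- we may assume $G$ has no isolated vertices and $|V|>k$, handling the remaining cases directly --- I would build a temporal election with $\ell=k$ rounds, candidate set $P=\{c_v:v\in V\}\cup\{d\}$ (with $d$ a dummy), and the outcome to be verified $\mathbf{o}=(d,\dots,d)$; as no voter will approve $d$, every voter has satisfaction $0$. The voters come in two kinds. For each vertex $v\in V$ and slot $j\in[k]$ there is a \emph{slot voter} $w_{v,j}$ with $s_{w_{v,j},j}=\{c_v\}$ and $s_{w_{v,j},t}=\{c_u:\{u,v\}\in E\}$ for $t\neq j$; informally $w_{v,j}$ pins the group's common candidate in round $j$ to ``vertex $v$'', and in every other round tolerates only the neighbours of $v$. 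In addition there is a set $N_0$ of \emph{padding voters}, each approving all of $\{c_v:v\in V\}$ in every round. I would choose $|N_0|$ so that, writing $n:=|V|\cdot k+|N_0|$, we have $\ell(|N_0|+k-1)<n\le\ell(|N_0|+k)$; the admissible values of $|N_0|$ form an interval of length $\tfrac{k}{k-1}>1$, so a suitable integer exists and is $O(|V|)$, keeping the reduction polynomial.

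The argument would then rest on one structural observation: whenever $\beta(N')=\ell$, the non-padding part $N'\setminus N_0$ is a collection $\{w_{u_j,j}:j\in S\}$ of slot voters using pairwise distinct slots $S\subseteq[k]$ whose vertices form a clique in $G$. (Padding voters never shrink a round's set of commonly approved candidates, so $\beta(N')=\beta(N'\setminus N_0)$; two slot voters for one slot but different vertices disagree in that slot's round; and $w_{u_j,j}\in N'$ forces the round-$j$ common candidate to be $c_{u_j}$, hence forces every other slot voter $w_{u_{j'},j'}\in N'$ to approve $c_{u_j}$ in round $j$, i.e.\ $u_{j'}\sim u_j$.) Since all satisfactions are $0$, an outcome would fail w-JR (equivalently w-PJR, equivalently w-EJR) exactly when some $N'$ satisfies $\beta(N')=\ell$ and $\alpha(N')>0$; and $\alpha(N')>0$ iff $|N'|\ge n/\ell$, which by the choice of $|N_0|$ is equivalent to $|N'|\ge|N_0|+k$. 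As the structural observation gives $|N'|\le|N_0|+|S|\le|N_0|+k$, this forces $N'=N_0\cup\{w_{u_j,j}:j\in[k]\}$ with $\{u_1,\dots,u_k\}$ a $k$-clique; conversely, any $k$-clique, with its vertices placed into the $k$ slots in arbitrary order, yields exactly such a witness (the slot voters agree in round $j$ on $c_{u_j}$ by clique-ness, and $|N_0|+k$ voters suffice for $\alpha\ge 1$). So a violating group exists iff $G$ has a $k$-clique.

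I expect the only delicate point to be the demand calibration: one must add just enough ``ballast'' voters $N_0$ that the demand $\alpha(N')=\lfloor\ell|N'|/n\rfloor$ of the intended witness (all of $N_0$ together with $k$ slot voters) reaches $1$, while every deficient group --- $N_0$ plus at most $k-1$ slot voters, or a subset of $N_0$ alone --- keeps demand $0$; verifying that a polynomial-size $|N_0|$ with this property always exists is the one computation needing care. The remaining ingredients --- coNP membership, the structural observation, and the observation that the argument does not distinguish w-JR from w-PJR and w-EJR --- should be routine.
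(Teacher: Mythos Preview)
Your proposal is correct. Both your argument and the paper's reduce a clique-type problem to the verification task with $\ell=k$ rounds, fix the outcome to a dummy candidate approved by nobody, and add padding voters to calibrate the demand threshold so that a witnessing coalition of exactly the right size exists iff there is a $k$-clique; since every voter has satisfaction~$0$, the three axioms w-JR, w-PJR, w-EJR collapse to the same condition.

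The difference is in the source problem. The paper reduces from \textsc{Multicolored Clique}, where the vertex set comes pre-partitioned into $k$ colour classes; this lets them associate round~$j$ directly with colour class~$V_j$, so each vertex gives rise to a single voter. You reduce from plain \textsc{Clique}, and compensate for the missing partition by introducing $k$ ``slot'' copies $w_{v,1},\dots,w_{v,k}$ of each vertex~$v$; your structural observation then has to rule out coalitions containing two copies of the same vertex or two voters in the same slot, which you do correctly. The paper's route is marginally cleaner (one voter per vertex, no slot bookkeeping), while yours shows that the extra structure of \textsc{Multicolored Clique} is not actually needed. The demand-calibration step --- the one point you flag as delicate --- is handled essentially identically in both proofs.
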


\section{Monotonic Preferences}\label{sec:mon}
Since the problem of checking whether an outcome provides (w-)JR, (w-)PJR, or 
(w-)EJR is intractable in general, it is natural to seek a restriction on voters' preferences that may yield positive results. In particular, 
a natural restriction in this context is monotonicity: once a voter
approves a candidate, she continues to approve it in subsequent rounds.
Formally, we will say that an election $(P, N, \ell, ({\mathbf s}_i)_{i\in N})$ 
is {\em monotonic} if 
for any two rounds $t,t' \in [\ell]$ with $t < t'$, each $p\in P$
and each $i\in N$ it holds that $p \in s_{i,t}$ implies $p \in s_{i,t'}$.
Monotonic elections occur if, e.g., candidates join the candidate pool 
over time, but never leave, and the voters' preferences over the available candidates
do not change.
Such preferences can also arise in settings where the candidates improve over time: e.g., job candidates become more experienced.
By reversing the time line, this can also model candidates that deteriorate over time (e.g., planning family meals from available ingredients).

We first note that verifying weak JR/PJR/EJR in monotonic elections is easy.

\begin{theorem}\label{thm:mon-ar}
    Given a monotonic election $(P, N, \ell, ({\mathbf s}_i)_{i\in N})$ togther with an outcome $\mathbf o$, for each of X $\in \{$w-JR, w-PJR, w-EJR$\}$ we can decide in polynomial time whether $\mathbf o$ provides X.
\end{theorem}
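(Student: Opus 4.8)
The plan is to show that in a monotonic election the groups that the weak axioms quantify over have a very simple structure, and then to treat each of the three axioms with a tailored polynomial subroutine.

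First I would prove the structural fact that in a monotonic election a group $N'$ satisfies $\beta(N')=\ell$ if and only if $\bigcap_{i\in N'}s_{i,1}\neq\varnothing$, i.e.\ if and only if $N'\subseteq N_p:=\{i\in N: p\in s_{i,1}\}$ for some candidate $p\in P$. The ``only if'' direction is immediate, since round~$1$ must contribute to $\beta(N')$. For ``if'', monotonicity gives that whenever all voters of $N'$ approve $p$ in round~$1$ they approve $p$ in every round, so $p\in\bigcap_{i\in N'}s_{i,t}$ for all $t$. Consequently, to test any of the weak axioms it suffices to look for a violating group inside one of the (at most $m$) sets $N_p$.

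For w-JR and w-EJR this already suffices. For a fixed $p$, among subgroups $N'\subseteq N_p$ of a fixed size $r$, the one minimizing $\max_{i\in N'}\textit{sat}_i(\mathbf o)$ can be taken to be the $r$ voters of $N_p$ with smallest satisfaction under $\mathbf o$; this is exactly Observation~\ref{obs:compare}. So after sorting $N_p$ by satisfaction we only check, for every $p$ and every $r\le |N_p|$, whether the $r$-th smallest satisfaction value in $N_p$ is less than $\lfloor \frac{r}{n}\cdot\ell \rfloor$ (for w-EJR), respectively whether $\lfloor \frac{r}{n}\cdot\ell \rfloor\ge 1$ while all $r$ smallest values are $0$ (for w-JR); $\mathbf o$ fails the axiom iff some check fails. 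This is clearly polynomial.

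The main obstacle is w-PJR, where the coverage term $\mathrm{cov}(S):=|\{t\in[\ell]: o_t\in\bigcup_{i\in S}s_{i,t}\}|$ does not shrink nicely when we pass to a small subgroup, so a sorting argument in the style of Observation~\ref{obs:compare} is no longer available (in fact, picking the voters with smallest individual coverage is not optimal). The key is that $\mathrm{cov}(S)=|\bigcup_{i\in S}R_i|$ with $R_i:=\{t\in[\ell]: o_t\in s_{i,t}\}$ is a monotone submodular coverage function of $S$. For a fixed $p$, define $f_p(S)=n\cdot\mathrm{cov}(S)-\ell\,|S|$ on the ground set $N_p$; it is submodular since $n\cdot\mathrm{cov}$ is submodular and $-\ell\,|S|$ is modular, and a short computation shows that a group $S\subseteq N_p$ witnesses a w-PJR violation (that is, $\mathrm{cov}(S)<\alpha(S)=\lfloor\frac{|S|}{n}\cdot\ell\rfloor$) precisely when $f_p(S)\le -n$. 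Hence $\mathbf o$ fails w-PJR iff $\min_{S\subseteq N_p}f_p(S)\le -n$ for some $p$, and each such minimum is computable in polynomial time by submodular function minimization; equivalently, one can maximize $\ell\,|S|-n\cdot\mathrm{cov}(S)$ over $S\subseteq N_p$ as a maximum-weight-closure problem (voters have profit $\ell$, rounds have cost $n$, and an arc from voter $i$ to each round in $R_i$ forces covered rounds to be paid for), solvable by a single minimum cut. Assembling the three cases gives the polynomial-time procedure; I expect the only delicate bookkeeping to be the verification that $f_p$ is submodular and that $-n$ is the correct threshold, which amounts to handling the floor in $\alpha$ carefully.
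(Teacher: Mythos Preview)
Your structural observation and your treatment of w-JR and w-EJR coincide with the paper's proof: both reduce to checking, for each candidate $p$, the prefixes $N_p^z$ obtained by sorting $N_p=\{i:p\in s_{i,1}\}$ by satisfaction, and both justify this via Observation~\ref{obs:compare}.

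For w-PJR you take a genuinely different route. The paper handles w-PJR by asserting that one ``modifies the checks'' on the same prefixes $N_p^z$ (replacing the maximum-satisfaction test with a coverage test), ``just as in the proof of Proposition~\ref{prop:m-ell}; we omit the details.'' You instead observe that the coverage function is submodular, show that $S\subseteq N_p$ is a w-PJR witness iff $f_p(S)=n\cdot\mathrm{cov}(S)-\ell\,|S|\le -n$, and invoke submodular function minimization (or the equivalent max-closure/min-cut formulation). Your equivalence is correct---the floor is handled by the integrality of $\mathrm{cov}(S)$, since $\mathrm{cov}(S)+1\le \ell|S|/n$ is equivalent to $\mathrm{cov}(S)\le \lfloor \ell|S|/n\rfloor-1$---and the submodularity of $f_p$ is immediate. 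Your caution is warranted: unlike in the EJR case, choosing the $r$ least-satisfied voters of $N_p$ need not minimize coverage among size-$r$ subgroups, so the Observation~\ref{obs:compare}-style argument does not transfer to the PJR coverage check without further work (and indeed one can construct monotonic instances where every prefix $N_q^z$ passes the coverage test while some other subgroup of $N_q$ violates w-PJR). What your approach buys is a complete argument for w-PJR; what the paper's prefix approach buys, where it does apply (w-JR and w-EJR), is an elementary $O(mn)$-group enumeration rather than a black-box optimization call.
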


For strong notions of justified representation, we also obtain 
easiness-of-verification results, though the proof
requires an additional level of complexity.

\begin{theorem}\label{thm:struct}
    Given a monotonic election $(P, N, \ell, ({\mathbf s}_i)_{i\in N})$ together with an outcome $\mathbf o$, for each of X $\in \{$JR, PJR, EJR$\}$ we can decide in polynomial time whether $\mathbf o$ provides X.
\end{theorem}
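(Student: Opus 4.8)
The plan is to exploit a structural feature of monotonic elections: for any group $N'$, if $p\in\bigcap_{i\in N'}s_{i,t}$ then monotonicity gives $p\in\bigcap_{i\in N'}s_{i,t'}$ for all $t'>t$, so the rounds in which $N'$ has a commonly approved candidate form a \emph{suffix} of $[\ell]$, and a single candidate serves as the common one throughout this suffix. Concretely, writing $N_{p,t}=\{i\in N: p\in s_{i,t}\}$, we have $\beta(N')\ge \ell-t^*+1$ iff $N'\subseteq N_{p,t^*}$ for some $p\in P$; moreover $\beta(N')=\ell-\tau(N')+1$ where $\tau(N')=\min\{t:\bigcap_{i\in N'}s_{i,t}\neq\varnothing\}$. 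This lets us replace the search over the $m^\ell 2^\ell$ pairs $(\mathbf o',T)$ from the proof of Proposition~\ref{prop:m-ell} by a search over the $m\ell$ pairs $(p,t^*)\in P\times[\ell]$: every group $N'$ sits inside $N_{p,\tau(N')}$ for a suitable common candidate $p$, with demand exactly $\lfloor(\ell-\tau(N')+1)|N'|/n\rfloor$, and conversely every $N'\subseteq N_{p,t^*}$ has $\beta(N')\ge\ell-t^*+1$.

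For EJR I would then follow Proposition~\ref{prop:m-ell}. Fix $(p,t^*)$, set $b=\ell-t^*+1$, sort the voters of $N_{p,t^*}$ by $\textit{sat}_i(\mathbf o)$ in non-decreasing order (ties broken lexicographically), let $N^r$ be the first $r$ of them, and for each $r$ check whether some $i\in N^r$ has $\textit{sat}_i(\mathbf o)\ge\lfloor\tfrac{r}{n}b\rfloor$. A failed check exhibits a violation since $\beta(N^r)\ge b$, $|N^r|=r$; conversely, a witness $N'$ lies in $N_{p,t^*}$ for $t^*=\tau(N')$ and $p$ a candidate commonly approved by $N'$ in round $t^*$, so $\beta(N')=b$, and Observation~\ref{obs:compare} applied to $\{\textit{sat}_i(\mathbf o):i\in N'\}$ and $\{\textit{sat}_i(\mathbf o):i\in N^{|N'|}\}$ shows $N^{|N'|}$ is itself a witness, so the corresponding check fails. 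For JR the same iteration works with the trivial check: flag $(p,t^*)$ if at least $\lceil n/b\rceil$ voters of $N_{p,t^*}$ have $\textit{sat}_i(\mathbf o)=0$ (such a group has positive demand and zero coverage, and conversely a witness $N'$ has $\beta(N')|N'|\ge n$ and all satisfactions zero).

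The PJR case carries the extra work, since the coverage $|\{t:o_t\in\bigcup_{i\in N'}s_{i,t}\}|$ of a prefix of the satisfaction order is not controlled by Observation~\ref{obs:compare}. Instead I would reduce each pair $(p,t^*)$ to a submodular minimization. Write $A_i=\{t\in[\ell]:o_t\in s_{i,t}\}$ for the set of rounds in which $i$ is satisfied by $\mathbf o$, set $N_{p,t^*}(C)=\{i\in N_{p,t^*}:A_i\subseteq C\}$ for $C\subseteq[\ell]$, and define
\[
\Phi_{p,t^*}(C)=b\cdot|N_{p,t^*}(C)|-n\cdot|C|.
\]
Each map $C\mapsto\mathbf 1[A_i\subseteq C]$ is supermodular (the marginal gain of adding a round only increases as $C$ grows) and $C\mapsto-n|C|$ is modular, so $\Phi_{p,t^*}$ is supermodular; hence $-\Phi_{p,t^*}$ is submodular with a polynomial-time value oracle, and $\max_C\Phi_{p,t^*}(C)$ is computable in polynomial time by submodular minimization. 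I claim $\mathbf o$ fails PJR iff $\max_C\Phi_{p,t^*}(C)\ge n$ for some $(p,t^*)$. If $\Phi_{p,t^*}(C)\ge n$, put $N'=N_{p,t^*}(C)$: then $b|N'|\ge n(|C|+1)\ge n(\textit{cov}(N')+1)$ and $\beta(N')\ge b$, so $\textit{cov}(N')<\lfloor\beta(N')|N'|/n\rfloor=\alpha(N')$. Conversely, a witness $N'$ with first-agreement round $\tau$ and common candidate $p$ satisfies $N'\subseteq N_{p,\tau}(C)$ for $C=\{t:o_t\in\bigcup_{i\in N'}s_{i,t}\}$, and the inequality $\textit{cov}(N')<\alpha(N')$ rearranges (for integers) to $b|N'|-n|C|\ge n$ with $b=\ell-\tau+1$, so $\Phi_{p,\tau}(C)\ge b|N'|-n|C|\ge n$. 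Overall this is polynomial: $m\ell$ pairs, one submodular minimization each.

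The main obstacle is this PJR step — recognizing that for a fixed coverage set the best group simply consists of \emph{all} compatible voters, that the resulting objective is supermodular in the coverage set, and hence that polynomial-time submodular minimization applies; this is the ``additional level of complexity'' over the weak case of Theorem~\ref{thm:mon-ar}, where $\beta(N')=\ell$ collapses everything to the round-$1$ approval sets. The rest is bookkeeping: the strictness of the PJR inequality forces the threshold $\Phi_{p,t^*}(C)\ge n$ (rather than $>0$), and ties in the satisfaction order in the EJR/JR arguments must be broken consistently with the chosen witness, both routine.
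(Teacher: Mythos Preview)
Your proof is correct. For JR and EJR you reproduce the paper's argument exactly: exploit monotonicity to reduce the search to the $m\ell$ sets $N_{p,t}$, take prefixes ordered by satisfaction, and invoke Observation~\ref{obs:compare}.

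For PJR you take a genuinely different route. The paper dispatches PJR by asserting that one replaces the satisfaction check on each prefix $N^r_{p,t}$ by a coverage check, ``just as in the proof of Proposition~\ref{prop:m-ell}; we omit the details.'' You correctly observe that Observation~\ref{obs:compare} controls the maximum satisfaction over a group, not the size of $\bigcup_{i}A_i$, so the prefix-by-satisfaction reduction does not transfer to PJR by the same reasoning. Instead you fix a target coverage set $C$, note that the best witnessing group for that $C$ is all of $N_{p,t^*}(C)$, and show that the resulting objective $\Phi_{p,t^*}(C)=b\,|N_{p,t^*}(C)|-n\,|C|$ is supermodular (each indicator $\mathbf 1[A_i\subseteq C]$ has non-decreasing marginals), so its maximum is computable via submodular minimization; the threshold $\Phi\ge n$ captures the strict PJR inequality precisely. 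This yields a complete, self-contained polynomial-time PJR check. The trade-off: the paper's sketched approach, if it can be completed, would be lighter and uniform with the JR/EJR cases; your approach pays the price of the submodular-minimization hammer but closes the correctness argument that the paper leaves to the reader. (A minor point: the ``additional level of complexity'' the paper alludes to is the extra round-index $t$ in $N_{p,t}$ compared to the weak case of Theorem~\ref{thm:mon-ar}, not anything PJR-specific.)
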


\section{Finding EJR Outcomes} \label{sec:s-ejr}
So far, we focused on checking whether an outcome provides a representation guarantee. We will now switch gears, and explore the problem of finding a fair outcome.

\subsection{Greedy Cohesive Rule Provides EJR}\label{sec:algos}

Chandak et al.~\shortcite{chandak23} show that every temporal election admits
an outcome that provides EJR, by adapting the PAV rule~\cite{thiele1895pav}
and its local search variant ls-PAV~\cite{aziz2018complexityepjr} from the multiwinner setting to the temporal setting; the local search-based approach 
results in a rule that is polynomial-time computable.
However, they leave it as an open question whether another prominent voting rule, namely, the Greedy Cohesive Rule (GCR) \cite{bredereckF0N19,peters2021fjr} can be adapted to the temporal setting so as to provide EJR. 
    
Bulteau et al.~\shortcite{bulteau2021jrperpetual} describe an algorithm that is similar in spirit to GCR and constructs a PJR outcome; however, unlike GCR, the algorithm of Bulteau et al.~\shortcite{bulteau2021jrperpetual} proceeds in two stages. 
We will now describe a two-stage procedure 
that is inspired by the algorithm of Bulteau et al.~(but differs from it) and always finds EJR outcomes.

\begin{algorithm}
\caption{2-Stage Greedy Cohesive Rule.}
\label{alg:EJR}
\SetInd{0.8em}{0.3em}
{\bf Input}: Set of voters $N = \{1,\dots,n\}$, set of candidates $P = \{p_1,\dots,p_m\}$, number of rounds $\ell$, voters' approval sets $(\mathbf{s}_1,\dots,\mathbf{s}_n)$\;
$\mathbf{o} \leftarrow (p_1,\dots, p_1)$\;
$T \leftarrow [\ell]$\;
$\mathcal{V} := \{V \subseteq N: \beta(V)>0\}$\;
${\mathcal V}^+=\varnothing$\;

\While{$\mathcal{V}\neq\varnothing$}{
    Select a set $V\in\arg\max_{V\in{\mathcal V}}\alpha(V)$ with ties broken arbitrarily\;
    ${\mathcal V}^+\gets{\mathcal V}^+\cup\{V\}$\;
    \For{$V'\in \mathcal{V}$}{
        \If{$V\cap V'\neq\varnothing$}{
            $\mathcal{V} \leftarrow \mathcal{V} \setminus \{V'\}$\;
        }
    }
}
Sort ${\mathcal V}^+$ as $V_1, \dots, V_q$ so that $\beta(V_1)\le\dots\le\beta(V_q)$\;
\For{$k=1, \dots, q$}{
    Let $T'$ be a subset of 
                    $\{t\in T: \cap_{i\in V_k}s_{i, t}\neq\varnothing\}$ of size $\alpha(V_k)$\;
    $T \leftarrow T \setminus T'$\;
                \For{$t \in T'$}{
                    Pick $p\in \cap_{i\in V_k}s_{i, t}$ and set $o_t \leftarrow p$\;
                }
}

\Return{outcome $\mathbf{o}$}\;
\end{algorithm}

\begin{theorem} \label{thm:ejr_existence}
    Algorithm~\ref{alg:EJR} always outputs an outcome that provides EJR,
    and runs in time $O(2^n\cdot\textrm{poly}(n, m, \ell))$.
\end{theorem}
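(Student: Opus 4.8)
First I would establish the running-time bound, which is the easier half: the set $\mathcal V$ initially has at most $2^n$ elements, each iteration of the \texttt{while}-loop removes at least the selected set $V$ (and typically more), and each iteration does $O(2^n\cdot\mathrm{poly}(n,m,\ell))$ work to recompute the $\arg\max$ and to scan for intersecting sets; computing $\alpha(V)$ for a given $V$ takes $\mathrm{poly}(n,m,\ell)$ time. Hence the first stage runs in $O(2^n\cdot\mathrm{poly})$ time, and the second stage iterates over $q\le n$ (since the sets in $\mathcal V^+$ are pairwise disjoint) sets, doing polynomial work each. The only subtlety is to check that the second stage is \emph{well-defined}, i.e.\ that when we reach $V_k$ there really are $\alpha(V_k)$ rounds still available in $T$ on which all members of $V_k$ approve a common candidate; I address this below, and it will also be the crux of the correctness argument.

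For correctness, the plan is to show two things: (i) the second \texttt{for}-loop never gets stuck, and (ii) the resulting outcome provides EJR. For (ii), the key structural fact is that $\mathcal V^+$ is a maximal collection of \emph{pairwise-disjoint} groups, chosen greedily in non-increasing order of demand $\alpha$. Take any group $N'\subseteq N$ with $\alpha(N')>0$; since $\mathcal V^+$ is maximal with respect to disjointness (every group with positive $\beta$, hence in particular $N'$, was either selected or removed because it met an earlier selection), there is some $V_j\in\mathcal V^+$ with $V_j\cap N'\neq\varnothing$, and by the greedy choice $\alpha(V_j)\ge\alpha(N')$ (the first such $V_j$ to intersect $N'$ was available when $N'$ was still in $\mathcal V$). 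Fix a voter $i\in V_j\cap N'$. In the second stage, the loop guarantees that voter $i$ is ``served'' on exactly $\alpha(V_j)$ rounds of $T'$ corresponding to $V_j$ — on each such round $t$ we set $o_t$ to a candidate in $\bigcap_{i'\in V_j}s_{i',t}\subseteq s_{i,t}$ — so $\mathit{sat}_i(\mathbf o)\ge\alpha(V_j)\ge\alpha(N')$, witnessing EJR for $N'$.

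The main obstacle, and the place where the two-stage ordering matters, is (i): showing the second loop can always find $T'$. Here I would argue by a counting/exchange argument using the sort $\beta(V_1)\le\dots\le\beta(V_q)$. When processing $V_k$, the rounds removed so far were used to serve $V_1,\dots,V_{k-1}$, totalling $\sum_{j<k}\alpha(V_j)$ rounds. I want to show that among the $\beta(V_k)$ rounds on which $V_k$ is cohesive, at most $\sum_{j<k}\alpha(V_j)$ have been used — equivalently, at least $\beta(V_k)-\sum_{j<k}\alpha(V_j)\ge\alpha(V_k)$ remain. The clean way is: since the $V_j$ are pairwise disjoint, for a fixed round $t$ at most one $V_j$ can have been ``cohesive on $t$ via a candidate approved by everyone in $V_j$'' in a way that forces conflict — more carefully, I would show that each round $t$ used to serve some $V_j$ with $j<k$ that also lies in $V_k$'s cohesive set must have been ``charged'' to that $V_j$, and since $\beta(V_j)\le\beta(V_k)$ is \emph{not} directly what's needed, the real lever is that $\sum_{j\le k}\alpha(V_j)\le\ell$ must be shown. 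I would prove $\sum_{j=1}^{q}\alpha(V_j)\le\ell$ by a fractional/transversal argument (each group $V_j$ can be assigned $\alpha(V_j)=\lfloor\beta(V_j)|V_j|/n\rfloor$ distinct rounds from its $\beta(V_j)$-set, and disjointness of the $V_j$ plus the sort lets a Hall-type matching go through); then process groups in increasing $\beta$ order so that a group with small $\beta(V_k)$ is handled before its scarce cohesive rounds get exhausted. Verifying this Hall/transversal condition — that the demands $\alpha(V_j)$ can be simultaneously met by disjoint round-sets drawn from the respective cohesive sets — is the technical heart of the proof and the step I expect to require the most care.
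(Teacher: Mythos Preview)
Your plan for the running time and for part (ii) is correct and matches the paper exactly. The divergence is in part (i), where you overcomplicate matters and, ironically, dismiss the very ingredient that makes the argument work.

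You correctly reduce (i) to showing $\sum_{j\le k}\alpha(V_j)\le\beta(V_k)$, but then write that ``$\beta(V_j)\le\beta(V_k)$ is not directly what's needed'' and that ``the real lever is $\sum_{j\le k}\alpha(V_j)\le\ell$''. Both remarks are off: the latter inequality is strictly weaker than what you need (since $\beta(V_k)\le\ell$) and would not suffice, while the former inequality is precisely the lever. Combining the sort order $\beta(V_j)\le\beta(V_k)$ with pairwise disjointness of the $V_j$ gives, in one line,
\[
\sum_{j\le k}\alpha(V_j)\;\le\;\sum_{j\le k}\frac{|V_j|}{n}\,\beta(V_j)\;\le\;\frac{\beta(V_k)}{n}\sum_{j\le k}|V_j|\;\le\;\frac{\beta(V_k)}{n}\cdot n\;=\;\beta(V_k),
\]
which is the entirety of the paper's feasibility proof. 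Your proposed Hall-condition route would also succeed---verifying Hall for any subfamily $S$ amounts to the same chain with $\beta(V_k)$ replaced by $\max_{j\in S}\beta(V_j)$---but it is unnecessary machinery, and you would still owe a separate argument that the specific greedy order in Algorithm~\ref{alg:EJR} actually realizes a feasible assignment (Hall only guarantees that one exists). The paper avoids both detours by proving the displayed inequality directly for the greedy order.
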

\begin{proof}
    The bound on the running time of the algorithm follows from the fact that 
    the size of $\mathcal V$ is $O(2^n)$, and the amount of computation performed 
    for each subset in $\mathcal V$ is polynomial in the input size.
    We now focus on correctness.

    Note that all sets in ${\mathcal V}^+$ are pairwise disjoint: for each $k\in [q]$, when we place $V_k$ in ${\mathcal V}^+$, we remove all sets that intersect
    $V_k$ from $\mathcal V$, and therefore no such set can be added to ${\mathcal V}^+$ at a future iteration.
    We will use this observation to argue that, while processing the sets in ${\mathcal V}^+$,
    for each set $V_k$ we consider there exist $\alpha(V_k)$ rounds in which all members of $V_k$ agree on at least one candidate.
    Suppose for a contradiction that this is not the case, and let $k \in [q]$ be the first index such that there are fewer than $\alpha(V_k)$ slots available out of the $\beta(V_k)$ rounds voters in $V_k$ agree on. This means that strictly more than $\beta(V_k) - \alpha(V_k)$ of these slots have been taken up in previous iterations, and hence $\sum_{r=1}^k\alpha(V_r) >\beta(V_k)$.
    Since $\beta(V_k)\ge \beta(V_r)$ for all $r\le k$, the inequality 
    implies
    \begin{equation}
        \sum_{r=1}^{k} \frac{|V_r|}{n} \cdot \beta(V_k) \geq
        \sum_{r=1}^{k} \frac{|V_r|}{n} \cdot \beta(V_r) \geq
        \sum_{r=1}^{k} \alpha(V_r) > \beta(V_k), 
    \end{equation}
    and hence $\sum_{r=1}^k|V_r|>n$, a contradiction
    with the fact the groups $V_1,\dots,V_{k-1}, V_k$ are pairwise
    disjoint, and $|V|=n$.

    We conclude that for each $V\in {\mathcal V}^+$ 
    it holds that $\textit{sat}_i({\mathbf o})\ge \alpha(V)$ for each $i\in V$.
    Hence, no set in ${\mathcal V}^+$ 
    can be a witness 
    that EJR is violated. Further, each set $V\in 2^N\setminus \mathcal V$
    has $\beta(V)=0$ and hence $\alpha(V)=0$, so
    it cannot witness that EJR is violated either. 
    It remains to consider sets in 
    ${\mathcal V}\setminus {\mathcal V}^+$. 
    
    Fix a set $V'\in {\mathcal V}\setminus {\mathcal V}^+$, 
    and suppose that it was deleted from $\mathcal V$ 
    when some $V$ was placed in ${\mathcal V}^+$ 
    (and hence  $\alpha(V')\le \alpha(V)$). 
    Moreover, when processing ${\mathcal V}^+$,  
    the algorithm ensured that the satisfaction of each voter $i\in V$ is at least 
    $\alpha(V)=\lfloor\frac{\beta(V)\cdot |V|}{n}\rfloor$. 
    As $V\cap V'\neq\varnothing$, 
    this means that the satisfaction of some voter $i\in V'$ is at least $\alpha(V)\ge \alpha(V')$.
    Thus, $V'$ cannot be a witness that EJR is violated.
\end{proof}

We note that for monotonic elections Algorithm~\ref{alg:EJR} can be modified
to run in polynomial time. Indeed, the proof of Theorem~\ref{thm:struct} shows that, when verifying representation axioms, it suffices to consider sets of the form $N_{p, t}^z$ (defined in the proof; see the appendix for details). If we modify 
Algorithm~\ref{alg:EJR} so that initially it only places sets of this form in $\mathcal V$, it will only have to consider $O(m\ell n)$ sets, each of them in polynomial time; on the other hand, by Theorem~\ref{thm:struct} its output would still provide EJR. 

While GCR runs in exponential time, it is a simple rule that can often be used to prove existence of fair outcomes; e.g., in the multiwinner setting, its outputs satisfy the stronger FJR axiom~\cite{peters2021fjr}, which is not satisfied by PAV or the Method of Equal Shares (MES). While FJR seems difficult to define for the temporal setting, this remains an interesting direction for future work, and GCR may prove to be a relevant tool. 
    Moreover, one may want to combine JR-like guarantees with additional welfare, coverage, or diversity guarantees; the associated problems are likely to be NP-hard, so GCR may be a useful starting point for an approximation or a fixed-parameter algorithm (that may be more practical than the ILP approach). 
    Finally, compared to PAV and MES, GCR may be easier to adapt to general monotone valuations (this is indeed the case in the multiwinner setting).

\subsection{An ILP For Finding EJR Outcomes} \label{sec:ilp}
We will now describe an algorithm for finding EJR outcomes 
that is based on integer linear programming (ILP). While
this algorithm does not run in polynomial time, it is very flexible:
e.g., we can easily modify it so as to find an EJR outcome that
maximizes the utilitarian social welfare, or provides satisfaction guarantees
to individual voters.

\begin{theorem}\label{thm:ilp}
    There exists an integer linear program (ILP) whose solutions correspond to outcomes 
    that provide EJR; the number of variables and the number of constraints of this ILP are bounded by a function of the number of voters $n$.
\end{theorem}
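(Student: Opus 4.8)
The plan is to build an ILP whose variables encode, for each round $t \in [\ell]$ and each ``approval pattern'' $A \subseteq N$, how many rounds with that pattern are assigned to a candidate witnessing that pattern, and then impose one family of EJR constraints indexed by subsets of voters. The key observation that keeps the size bounded by a function of $n$ alone is that a round $t$ is fully described, for representation purposes, by the set of voters who can be simultaneously satisfied in it — equivalently, by the family $\mathcal{F}_t = \{A \subseteq N : \bigcap_{i \in A} s_{i,t} \neq \varnothing\}$, which is downward closed and hence determined by its maximal elements. There are at most $2^{2^n}$ such families, so we can group the $\ell$ rounds into at most $2^{2^n}$ types; let $n_\tau$ be the number of rounds of type $\tau$. (If one prefers to avoid the double-exponential dependence, group instead by the vector of maximal co-approving sets; either way the bound is a function of $n$ only.)

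First I would introduce integer variables $x_{\tau, A} \ge 0$ for each round type $\tau$ and each $A$ that is maximal in the family defining $\tau$, with the reading ``$x_{\tau,A}$ of the $n_\tau$ rounds of type $\tau$ are resolved by picking a candidate in $\bigcap_{i \in A} s_{i,t}$.'' Impose $\sum_A x_{\tau,A} = n_\tau$ for each $\tau$ (every round gets a chosen candidate; rounds where we pick a candidate nobody in a nontrivial group approves are folded into, say, $A = \varnothing$ or a singleton). From the $x$ variables, derive for each voter $i$ a satisfaction expression $\textit{sat}_i = \sum_{\tau} \sum_{A \ni i} x_{\tau, A}$, which is linear. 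Next, for the EJR constraints: for every subset $N' \subseteq N$, we know $\beta(N') = \sum_{\tau : N' \in \mathcal{F}_\tau} n_\tau$ is a constant, hence $\alpha(N') = \lfloor \beta(N') |N'|/n \rfloor$ is a constant; EJR for $N'$ demands $\max_{i \in N'} \textit{sat}_i \ge \alpha(N')$, i.e., the disjunction $\bigvee_{i \in N'} (\textit{sat}_i \ge \alpha(N'))$. I would linearize this disjunction with $|N'|$ auxiliary binary variables $y_{N',i}$ and the standard big-$M$ trick: $\textit{sat}_i \ge \alpha(N') - M(1 - y_{N',i})$ for each $i \in N'$, together with $\sum_{i \in N'} y_{N',i} \ge 1$, where $M = \ell$ suffices. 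There are $2^n$ subsets $N'$ and at most $n$ auxiliary variables and constraints per subset, so the total variable and constraint count is $O(2^n \cdot n)$ plus the $O(2^{2^n} \cdot 2^n)$ variables $x_{\tau,A}$ — all bounded by a function of $n$.

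The correctness argument has two directions. Given a feasible ILP solution, process each round type $\tau$ and distribute the $n_\tau$ rounds among the maximal sets $A$ according to $x_{\tau,A}$, choosing in each assigned round an actual candidate from $\bigcap_{i\in A} s_{i,t}$ (possible since $A \in \mathcal{F}_t$); the resulting outcome $\mathbf{o}$ has $\textit{sat}_i(\mathbf{o})$ exactly equal to the ILP expression $\textit{sat}_i$ — here I'd note that if $A \subsetneq A'$ are both in $\mathcal{F}_t$, picking the witness for $A'$ also satisfies everyone in $A$, so restricting to maximal $A$ loses nothing, but one must be slightly careful that a round assigned to $A$ may additionally satisfy voters outside $A$; to make $\textit{sat}_i$ exact one can instead pick, in a round assigned to $A$, a candidate whose approvers in that round are exactly a fixed maximal set — or, more simply, relax the EJR constraints to $\ge$ which only helps. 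Then the $y$/big-$M$ constraints guarantee each $N'$ has a member with $\textit{sat}_i \ge \alpha(N')$, so $\mathbf{o}$ provides EJR. Conversely, any EJR outcome induces values $x_{\tau,A}$ (and $y_{N',i}$ picking a satisfied representative) that satisfy all constraints. The main obstacle I anticipate is exactly this bookkeeping subtlety — ensuring the linear quantity $\textit{sat}_i$ in the ILP faithfully equals the true satisfaction in the reconstructed outcome, given that rounds ``overshoot'' by satisfying more voters than the group they were booked against; the clean fix is to index $x$ by the \emph{exact} set of approvers realizable in a round (still a subset of $N$, so still fine for the $n$-bounded count) rather than by maximal co-approving sets, after which $\textit{sat}_i = \sum_{\tau}\sum_{A \ni i} x_{\tau,A}$ is exact by construction.
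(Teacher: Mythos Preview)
Your proposal is correct and follows essentially the same approach as the paper: group rounds into finitely many types determined solely by voter-side information, introduce integer allocation variables per type, express each voter's satisfaction linearly, and linearize the per-subset EJR disjunction with auxiliary binaries. The only noteworthy implementation differences are that the paper first collapses $P$ to at most $2^n$ ``canonical'' candidates $p_V$ (one per realized approver set) and then types rounds by the full approval profile, which sidesteps the satisfaction-accounting subtlety you flagged; and that the paper linearizes $\max_{i\in V}\textit{sat}_i\ge\alpha(V)$ via $\textit{sat}_i\ge\alpha(V)\cdot\xi_{i,V}$ with $\sum_{i\in V}\xi_{i,V}\ge 1$, avoiding the big-$M$ constant.
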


The following corollary illustrates the power of the ILP-based approach.

\begin{corollary}\label{cor:ejropt}
There is an FPT algorithm with respect to the number of voters that, given an election $E=(P, N, \ell, ({\mathbf s}_i)_{i\in N})$ and
a set of integers $\delta_1, \dots, \delta_n$, decides
whether there exists an EJR outcome of $E$ that guarantees
satisfaction $\delta_i$ to voter $i$ for each $i\in N$, and, if yes, finds an outcome
that maximizes the utilitarian social welfare among all outcomes with this 
property.
\end{corollary}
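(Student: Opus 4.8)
The plan is to reuse the integer linear program from the proof of Theorem~\ref{thm:ilp} and bolt on a few linear constraints together with a linear objective. That ILP has a set of integer variables (the counters $x_{p,\tau}$ recording how often candidate $p$ is chosen in rounds of type $\tau$, plus the auxiliary indicators used to encode EJR) whose number, and the number of whose constraints, is bounded by a function of $n$; its feasible solutions are in correspondence with EJR outcomes of $E$; and in it the satisfaction of voter $i$ is the linear form $\sigma_i := \sum_{\tau}\sum_{p\in s_{i,\tau}}x_{p,\tau}$, where the $s_{i,\tau}$ are the round-type approval sets of that construction. I would add, for each $i\in N$, the constraint $\sigma_i \ge \delta_i$, and install the objective $\max\sum_{i\in N}\sigma_i$, which is exactly the utilitarian social welfare of the encoded outcome. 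The augmented program is still an ILP with a function-of-$n$ number of variables and constraints, all of whose coefficients are bounded by $\max\{\ell,\delta_1,\ldots,\delta_n\}$, so its total bit-length is at most a function of $n$ times a polynomial in the bit-length of the input. By Lenstra's theorem---integer linear feasibility and optimization are solvable in time fixed-parameter tractable in the number of variables and polynomial in the bit-length of the instance---the program can be solved in time $g(n)\cdot\mathrm{poly}(|E|,\sum_i\log(|\delta_i|+1))$, i.e., FPT with respect to $n$. If it is infeasible, no EJR outcome meeting the demands $\delta_i$ exists, and we report this.

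Given an optimal solution, I would decode it into an outcome exactly as in the proof of Theorem~\ref{thm:ilp}: within the block of rounds of each type $\tau$, assign candidate $p$ to $x_{p,\tau}$ of these rounds in an arbitrary order, and then undo the candidate-compression step of that proof. One then checks that this yields an EJR outcome with $\textit{sat}_i(\mathbf{o})\ge\delta_i$ for every $i$ (the decoding never lowers any voter's per-round approval status, and $\sigma_i$ is a lower bound on $\textit{sat}_i(\mathbf{o})$), and, conversely, that every EJR outcome of $E$ meeting the demands encodes back into a feasible solution with objective value equal to its utilitarian welfare. These two facts together imply that the ILP optimum equals the maximum utilitarian welfare over all EJR outcomes satisfying the demands, and that the decoded optimal solution attains it.

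I do not expect a genuine obstacle: the corollary is essentially immediate once Theorem~\ref{thm:ilp} is available, since it merely appends $n$ extra linear inequalities and one linear objective to that ILP. The only steps that need any care are the bookkeeping required to apply Lenstra's bound---confirming that the number of variables stays a pure function of $n$ and that the coefficients stay polynomially bounded in the input, so the running time is genuinely FPT in $n$ and not, say, in $n$ together with $\ell$---and the verification that the encode/decode correspondence inherited from Theorem~\ref{thm:ilp} preserves utilitarian welfare, so that the ILP optimum coincides with the true maximum.
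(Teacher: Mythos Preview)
Your proposal is correct and follows essentially the same approach as the paper: add the $n$ satisfaction constraints $\sigma_i\ge\delta_i$ and the welfare objective to the ILP of Theorem~\ref{thm:ilp}, then invoke Lenstra. The paper's proof is a terse three sentences saying exactly this; your additional bookkeeping on coefficient sizes and on the encode/decode correspondence is sound (in fact the decoding from Theorem~\ref{thm:ilp} preserves each voter's satisfaction exactly, so your ``lower bound'' hedge is stronger than needed), but none of it diverges from the paper's argument.
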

\begin{proof}
    The ILP in the proof of Theorem~\ref{thm:ilp} contains an expression
    encoding each voter's satisfaction. We can modify this ILP by adding constraints saying
    that the satisfaction of voter $i$ is at least $\delta_i$, and adding a goal function
    that maximizes the sum of voters' satisfactions. The resulting ILP admits
    an algorithm that is FPT with respect to $n$ by the classic result 
    of Lenstra~\shortcite{lenstra1983integer}.
\end{proof}

We note that, while Chandak et al.~\shortcite{chandak23} show that ls-PAV
rule can find EJR outcomes in polynomial time, their approach cannot
handle additional constraints on voters' welfare; hence 
Corollary~\ref{cor:ejropt} is not implied by their result.

\subsection{An Impossibility Result for EJR in the (Semi-)Online Setting}
In their analysis, Chandak et al.~\shortcite{chandak23} distinguish among (1) the \emph{online setting}, 
where the number of rounds $\ell$ is not known, voters' preferences are revealed round-by-round, and $o_t$ is selected
as soon as all $(s_{i, t})_{i\in N}$ are revealed, 
(2) the \emph{semi-online setting}, where $\ell$ is known, but preferences are revealed round-by-round and an outcome for a round needs to be chosen as soon as the preferences for that round have been revealed, and (3) the \emph{offline setting}, where we select
the entire outcome $\mathbf o$ given full access to $({\mathbf s}_i)_{i\in N}$.
The PAV rule and its local search variant, the GCR rule and the ILP approach only work in the offline setting. In contrast, 
Chandak et al.~\shortcite{chandak23} show that
a variant of MES satisfies w-EJR in the semi-online setting, but not in the online setting. They left open the existence of rules satisfying EJR in the semi-online setting.
Here, we resolve this open question and show that no rule can satisfy EJR in the semi-online setting (and hence in the online setting).
\begin{proposition} \label{prop:s-EJR_nonexistence}
    No rule satisfies EJR 
    in the semi-online setting.
\end{proposition}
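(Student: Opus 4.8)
The plan is to prove the impossibility via an adversary argument. View the semi-online process as a game in which, at round $t$, an adversary reveals the profile $(s_{i,t})_{i\in N}$ and the rule replies with a candidate $o_t$ depending only on rounds $1,\dots,t$; it then suffices to give the adversary a strategy that, against \emph{any} rule, produces a temporal election together with an outcome that fails EJR.

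I would build the instance so that the rule is forced into an irrecoverable error. In the first round (or first few rounds) the voters are split into several cohesive blocks whose members approve pairwise distinct candidates, so that a single selected candidate can fully serve at most one block in that round, and some block $B$ is necessarily left unserved. From then on the adversary keeps the members of $B$ agreeing, round after round, on a fresh common candidate, so that $\beta(B)$ --- and hence the demand $\alpha(B)=\lfloor\beta(B)\cdot|B|/n\rfloor$ --- keeps growing, whereas the rule has already spent earlier rounds on other blocks. The sizes of the blocks, their number, and the values of $n$ and $\ell$ must be chosen so that, whatever the rule does at each decision point, at the end some block has demand at least $2$ while none of its members has been satisfied in more than one round; a group of demand $1$ can always be rescued in a single later round, so demand at least $2$ for the victim is essential (which is why the instance cannot be taken to be too small). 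The proof then finishes with a bounded case analysis over the rule's possible moves, exhibiting along each branch a concrete group that witnesses the EJR violation, together with a check that all the revealed preferences fit into one temporal election.

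The main difficulty is that a clever rule may hedge: by choosing candidates approved by voters that lie in several cohesive blocks at once, it can raise many voters' satisfactions simultaneously, and an EJR outcome always exists in the offline setting (Chandak et al.~\shortcite{chandak23}). The crux of the argument is therefore to engineer the revealed preferences so that at the critical round no candidate is simultaneously safe against all of the adversary's possible continuations --- that is, so that the gap between the online and the offline problem actually bites. Once such an instance is fixed, the remaining steps --- tracking $\beta$ and $\alpha$ for the relevant groups and confirming the violation on every branch --- are routine bookkeeping.
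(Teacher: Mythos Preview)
Your high-level plan --- an adaptive adversary that reveals a symmetric first phase, observes the rule's choices, and then completes the instance so as to create an unavoidable EJR violation --- is exactly the structure of the paper's proof. What does not work is the specific mechanism you describe: letting the \emph{unserved} block $B$ be the eventual witness by ``keeping the members of $B$ agreeing, round after round.'' If that is all the adversary does, a sensible rule simply devotes the remaining rounds to $B$; then every member of $B$ gets satisfaction equal to the number of remaining rounds, while $\alpha(B)=\lfloor\beta(B)\cdot|B|/n\rfloor\le\beta(B)$, and no violation arises. Missing one or a few early rounds is not irrecoverable unless the rule is \emph{forced} to keep spending later rounds elsewhere --- and your sketch never says what the other voters do in phase~2 to create that pressure. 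Your ``main difficulty'' paragraph correctly names this gap, but the construction you outline before it does not close it.

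The paper's construction resolves this with a twist that runs opposite to your intuition. In phase~1 (the first $k$ of $n=2k$ rounds) every voter approves a distinct singleton, so by symmetry exactly some set $S$ of $k$ voters ends phase~1 with satisfaction~$1$ and the remaining $k$ voters have satisfaction~$0$. In phase~2 the adversary makes the \emph{served} set $S$ cohesive (all approve a common candidate), while each \emph{unserved} voter becomes a demanding singleton approving its own distinct candidate. Each unserved singleton $\{j\}$ has $\beta(\{j\})=n$ and hence $\alpha(\{j\})=1$ with current satisfaction~$0$, so even JR forces the rule to give each of them a round; this consumes all $k$ rounds of phase~2. Consequently $S$ is never served in phase~2, so every $i\in S$ stays at satisfaction~$1$, yet $\beta(S)=k$ and $\alpha(S)=\lfloor k/2\rfloor\ge 2$ (for $k\ge 4$), giving the EJR violation. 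The key idea you are missing is that the unserved voters are not the witness; they are the \emph{lever} --- their zero satisfaction makes their singleton demands non-negotiable, and this is what prevents the rule from catching up on the cohesive group.
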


\section{Conclusion}
We have explored the complexity of verifying whether a given outcome
of a temporal election satisfies one of the six representation axioms
considered by Chandak et al.~\shortcite{chandak23}.
We have obtained coNP-hardness results even for very restricted special cases of this problem: e.g., for strong versions of the axioms verification remains hard even if there are just two candidates.
We complement these hardness results with parameterized complexity results and a 
positive result for a structured setting, where candidates join the pool over time
and never leave. We also describe an ILP that can be used to find outcomes that provide EJR and satisfy additional constraints. Finally, we answer an open question of Chandak et al.~\shortcite{chandak23}, by showing
that a variant of the Greedy Cohesive Rule provides EJR.

Possible directions for future work (in addition to open problems listed in Section~\ref{sec:m}) include considering stronger variants of proportionality, such as FJR \cite{peters2021fjr} or core stability~\cite{aziz2017jr}, exploring other domain restrictions, and extending the temporal framework
to the broader setting of participatory budgeting~\cite{aziz2021pbordinal,lackner2021longtermpb,peters2021fjr}.

\newpage

\section*{Acknowledgments}
Edith Elkind was supported by the AI Programme of The Alan Turing Institute and an EPSRC Grant EP/X038548/1. Jannik Peters was supported by the Singapore Ministry of Education under grant number MOE-T2EP20221-0001.

\bibliography{abb,aaai25}

\newpage

\appendix

\onecolumn

\begin{center}
\Large
\textbf{Appendix}
\end{center}

\vspace{2mm}
\section{Omitted Proofs from Section \ref{sec:hard}}
\subsection{Proof of Theorem \ref{thm:np-hard-ap} (continued)}
We will now prove that verifying whether an outcome provides w-EJR is coNP-hard even if $|P|=2$.
    To this end, we reduce from the NP-hard problem \textsc{Independent Set with Maximum Degree of Three (IS-3)} \cite{garey1976npcompletegraphs}.
    An instance of \textsc{IS-3} is given by a graph $G = (V,E)$ with a vertex set $V$ and an edge set $E$ such that every vertex in $V$ has at most three neighbors, together with a parameter $\kappa$; it is a `yes'-instance if there exists a subset of vertices $V' \subseteq V$ with $|V'| \geq \kappa$ such that for all vertices $u,v \in V'$ it holds that $\{u,v\} \notin E$, and a `no' instance otherwise. We can assume without loss of generality that $\kappa\le |V|-3$: indeed, if $\kappa\ge |V|-2$, we can solve \textsc{IS-3} in time $O(|V|^2)$ by checking all size-$\kappa$ subsets of $V$.
    
    Given an instance $(G,\kappa)$ of \textsc{IS-3} with $G = (V,E)$, where $V = \{v_1,\dots, v_\nu\}$ and $\kappa\le \nu-3$, we construct an election with a set of voters $N = \{1,\dots,\nu\} \cup N_0$, where $|N_0| = \nu + 1 - \kappa$, a set of candidates $P = \{p,q\}$, and $\ell = 2\nu + 1 - \kappa$ rounds. 
    For each $i \in N$, the preferences of voter $i$ in round $t$ are given by: 
    \begin{equation*}
    	s_{i,t} = 
    	\begin{cases}
            \{p\} &\text{if } t >\nu\\
    		\{q\} & \text{if } i\in [\nu] \text{ and } t = i; \\
    		\{p\} & \text{if } i, t\in [\nu] \text{ and }\{v_i,v_t\} \in E; \\
    		\{p,q\} & \text{in all other cases}.
    	\end{cases}
    \end{equation*}

    Set $\mathbf{o} = (q,\dots,q)$.
    We claim that there exists a set of voters $N^* \subseteq [\nu] \cup N_0$ such that $\beta(N^*) = \ell$ but $\textit{sat}_i(\mathbf{o}) < \alpha(N^*)$ for all $i \in N^*$ if and only if the set $V'=\{v_i: i\in N^*\cap [\nu]\}$
    is an independent set of size at least $\kappa$ in $G$.
    
    For the `if' direction, let $V'$ be an independent set of size at least $\kappa$ in $G$, and let $N'$ be the set of voters that correspond to the vertices in $V'$. Set $N^*=N'\cup N_0$.
    We claim that $\beta(N^*)=\ell$.
    Indeed, consider two voters $i,i' \in N'$, a voter $j \in N_0$, and a round $t\in [\ell]$;
    we will now show that $s_{i, t}\cap s_{i', t}\cap s_{j, t}\neq\varnothing$. To this end, we consider the following cases:
    \begin{itemize}
        \item $t>\nu$. Then $s_{i, t}=s_{i', t}=s_{j, t} = \{p\}$.
        \item $t=i$. Then $s_{i,t} = \{q\}$ and $s_{i',t} = s_{j,t} = \{p,q\}$;
        \item $t=i'$. Then $s_{i',t} = \{q\}$ and $s_{i,t} = s_{j,t} = \{p,q\}$;
        \item $t\in [\nu]\setminus\{i, i'\}$. Then $p\in s_{i,t}$, $p\in s_{i',t}$, $s_{j,t} = \{p,q\}$.
    \end{itemize}
    As all three voters approve a common candidate in each round, we have $\beta(N^*) = \ell$.
   Then, we get 
    \begin{align*}
        \alpha(N^*) & = \left\lfloor \beta(N^*) \cdot \frac{|N^*|}{2\nu+1-\kappa} \right\rfloor 
        \ge \left\lfloor (2\nu + 1 - \kappa) \cdot \frac{\nu + 1 }{2\nu+1-\kappa}  \right\rfloor 
        = \nu + 1.
    \end{align*}
    However, since $s_{i,t} = \{p\}$ for all $i \in N$ and all $t > \nu$, the satisfaction of every voter in $N^*$ from outcome $\mathbf o$ does not exceed $\nu$.
    Thus, we obtain $\textit{sat}_i(\mathbf{o}) < \alpha(N^*)$ for all $i \in N^*$.

    For the `only if' direction, suppose there exists a set of voters $N^* \subseteq [\nu] \cup N_0$ such that $\beta(N^*) = \ell$ but $\textit{sat}_i(\mathbf{o}) < \alpha(N^*)$ for all $i \in N^*$.
    Let $N'=N^*\cap [\nu]$, and let $V'=\{v_i: i\in N'\}$. We claim that $V'$ is an independent set. Indeed, if $\{v_i, v_j\}\in E$ for some $i, j\in N'$ then
    $s_{i,i} = \{q\}$ and $s_{i,j} = \{p\}$, which is a contradiction with
    $\beta(N^*) = \ell$. It remains to argue that $|V'|\ge \kappa$. To this end, we note that $n=\ell=2\nu+1-\kappa$ and $\beta(N^*)=\ell$ implies that $\alpha(N^*)=|N^*|$, 
    and consider two cases.
    \begin{itemize}
        \item $N^*\subseteq [\nu]$. In this case, $N'=N^*$.
        Since every vertex in $V$ has at most three neighbors, for each $i \in N'$ we have  $\textit{sat}_i(\mathbf{o}) \geq \nu-3$.
        On the other hand, by our assumption, $\textit{sat}_i(\mathbf{o}) < \alpha(N^*)=|N^*|$ for all $i \in N^*$.
        Therefore, $V'$ is an independent set of size $|N^*|>\nu-3\ge \kappa$.
         \item $N^*\cap N_0\neq\varnothing$. Then, consider an agent $i\in N^*\cap N_0$. For each
    $i \in N_0$ we have $\mathit{sat}_i(\mathbf{o}) = \nu$ and hence $|N^*|=\alpha(N^*)>\nu$. Since $|N_0|=\nu+1-\kappa$, 
    this implies $|V'|=|N'|=|N^*\setminus N_0|> \kappa-1$. Thus, $|V'|$ is an independent set of size at least $\kappa$.
    \end{itemize}
    In both cases, we conclude that $V'$ is an independent set of size at least $\kappa$, and hence our input instance of \textsc{IS-3} was a `yes'-instance.   
    This completes the proof.

\subsection{Proof of Theorem \ref{thm:np-hard}}
It is easy to see that our problems are in coNP: 
we can modify the proof of Theorem~\ref{thm:np-hard-ap}
by omitting the condition $\beta(N')=\ell$.

To establish coNP-hardness, 
we give a reduction from the NP-hard problem {\sc Maximum Edge Biclique}~\cite{peeters}.
An instance of {\sc Maximum Edge Biclique} is given by a bipartite graph $G=(L\cup R, E)$
and a parameter $\kappa$; it is a `yes'-instance if and only if $G$ contains a biclique
with at least $\kappa$ edges, i.e., there exist $L'\subseteq L$, 
$R'\subseteq R$ such that for each $i\in L'$, $j\in R'$ it holds that $\{i, j\}\in E$ and
$|L'|\cdot |R'|\ge \kappa$. 

We can assume without loss of generality that $\kappa>|L|+|R|$. Indeed, if this is not the case, 
we can construct a new graph $G'$ by
replacing each vertex $v\in L\cup R$ with $\xi=|L|+|R|+1$ copies $v_1, \dots, v_\xi$
and adding an edge between the $i$-th copy of $v$ and the $j$-th copy of $u$ for all $i,j \in [\xi] \times [\xi]$
if and only if $\{v, u\}\in E$. The graph $G'$ has $\xi(|L|+|R|)$ vertices, 
and it contains a biclique with $\xi^2\kappa$ edges if and only if $G$ contains
a biclique with $\kappa$ edges; our choice of $\xi$ ensures that
$\xi^2\kappa>\xi(|L|+|R|)$.

We will give a coNP-hardness proof for JR and then extend it to PJR and EJR.
Consider an instance of {\sc Maximum Edge Biclique} given by a graph $G=(L\cup R, E)$
and a parameter $\kappa$, where $L=\{v_1, \dots, v_\nu\}$, $R=\{u_1, \dots, u_\lambda\}$
and $\kappa>\nu+\lambda$. 
We construct an election with a set of voters $N = \{1, \dots, \nu\}\cup N_0$,
where $|N_0|=\kappa-\nu$,  
a set of candidates $P=\{p, q\}$, and $\ell = \lambda$ rounds. 
For each $i \in [\nu]$ and each $t\in [\ell]$ we set $s_{i, t}=\{p\}$
if $\{v_i, u_t\}\in E$ and $s_{i, t}=\varnothing$ otherwise;
for each $i\in N_0$ and each $t\in [\ell]$ we set $s_{i, t}=\{q\}$.
Let ${\mathbf o}=(q, \dots, q)$.

Suppose $G$ contains a biclique with parts $L'\subseteq L$ and $R'\subseteq R$
and at least $\kappa$ edges, and consider the set of voters $N'=\{i: v_i\in L'\}$
and the set of rounds $T=\{t\in [\ell]: u_t\in R'\}$. Note that 
for each $i\in N'$, $t\in T$ we have $s_{i, t}=\{p\}$ and hence $\beta(N')\ge |T|=|R'|$.
Therefore, we obtain 
$$
\alpha(N')= \left\lfloor\beta(N')\cdot\frac{|N'|}{|N|}\right\rfloor\ge 
\left\lfloor\frac{|R'|\cdot|L'|}{|N|}\right\rfloor\ge \frac{\kappa}{\kappa}=1.
$$
Thus, $N'$ has positive demand, 
but no voter in $N'$ derives positive satisfaction from $\mathbf o$, i.e., JR is violated.

Conversely, suppose that there is a group of voters $N'$ with positive demand
such that $\textit{sat}_i({\mathbf o})=0$ for each $i\in N'$.
Then $N'\subseteq [\nu]$, as all voters in $N_0$ approve $q$ in each round.
Consider a set of rounds $T\subseteq [\ell]$ given by
$T= \{t\in [\ell]: \cap_{i\in N'}s_{i, t}\neq\varnothing\}$; we have $s_{i, t}=\{p\}$
for each $i\in N'$, $t\in T$. Thus, $(N', T)$ corresponds to a biclique in $G$:
indeed, we have $s_{i, t}=\{p\}$ if and only if $\{v_i, u_t\}\in E$.
It follows that $G$ contains a biclique with at least $|N'|\cdot |T|$ edges. 
On the other hand, we have $\beta(N')=|T|$ and hence 
$\alpha(N')=\lfloor\frac{|N'|\cdot|T|}{\kappa}\rfloor$. Thus, 
$\alpha(N')\ge 1$ implies $|N'|\cdot |T|\ge \kappa$.

To extend our result to PJR and EJR, we note that if $\mathbf o$ provides PJR (respectively, 
EJR) then it also provides JR, and hence, as argued above, $G$ does not contain a biclique
of size $\kappa$. Conversely, if $\mathbf o$ violates PJR (respectively, EJR), there exists a group
of voters $N'$ with a positive demand such that 
$|\{t\in [\ell]: o_t\in \cup_{i\in N'}s_{i, t}\}|<\alpha(N')$
(respectively, $\textit{sat}_i({\mathbf o})<\alpha(N')$ for all $i\in N'$). Now, the satisfaction of each 
voter in $N_0$ is $\ell$, and the demand of any group of voters is at most $\ell$.
Hence, $N'\cap N_0=\varnothing$, i.e., $N'\subseteq [\nu]$. 
As argued above, the conditions $N'\subseteq [\nu]$ and $\alpha(N')>0$
imply that the pair $(N', T)$, where 
$T= \{t\in [\ell]: \cap_{i\in N'}s_{i, t}\neq\varnothing\}$, 
corresponds to a biclique
in $G$ that has at least $\kappa$ edges.

\section{Omitted Proofs from Section \ref{sec:parameterized}}
\subsection{Proof of Proposition \ref{prop:FPT_n}}
Fix an outcome $\mathbf o$.
    We go through all subsets of voters $N'\subseteq N$. For each subset
    $N'$ we compute $\beta(N')$, by going through all rounds
    $t\in [\ell]$ and, for each round $t$, checking whether 
    $\cap_{i\in N'} s_{i, t}\neq \varnothing$.
    We then compute $\alpha(N')$ as well as the satisfaction that each voter in $N'$
    derives from $\mathbf o$. We can then easily verify whether the (w-)EJR
    (respectively, (w-)PJR, (w-)JR) condition is violated for $N'$.

    The bound on the running time follows immediately: we consider $2^n$ subsets
    of voters and perform a polynomial amount of work for each subset.

\subsection{Proof of Theorem \ref{thm:W1-hard}}
    To show W[1]-hardness, we reduce from the \emph{multicolored clique} problem. In this problem, we are given a parameter $k$ and a $k$-partite graph $G = (V_1 \sqcup \dots \sqcup V_k, E)$ with the goal being to find a clique of size $k$. It is well known that multicolored clique is W[1]-hard when parameterized by $k$ \citep{cygan2015parameterized}. We construct an election as follows: Let $\lvert V_1 \sqcup \dots \sqcup V_k\rvert = n'$ be the number of vertices in $G$. 
    
    In our election instance, we have $k$ rounds, one for each partition of $G$. At round $i \in [k]$, the candidates are the vertices in $V_i$ together with one dummy candidate. 
    Let each voter corresponding to a vertex in $V_i$ only approve of the candidate corresponding to itself. 
    Let each voter corresponding to a vertex not in $V_i$ approve all candidates corresponding to vertices connected to it. 
    The dummy candidate is approved by no voter.
    
    Finally, to make the instance balanced, we compare $\frac{n'}{k}$ to $k$. 
    If $\frac{n'}{k} < k$, we add $k^2 - n$ voters that approves of no candidates to the instance. 
    Thus, the $k$ vertices corresponding to a clique would correspond to exactly $\frac{n}{k}$ many voters. 
    If $\frac{n'}{k} > k$ we add $n'' = \lceil\frac{n' - k^2}{(k-1)}\rceil$ voters that each approve of every candidate to the instance. 
    Therefore, the $n''$ new voters together with $k$ potential clique voters correspond to an almost $\frac{n' + n''}{k}$ fraction of the new instance. 
    
    We now claim that the committee consisting of dummy candidates does not satisfy w-JR if and only if there exists a multicolored clique.
    
    First, assume that committee consisting of dummy candidates does not satisfy w-JR. Then there must exist $\frac{n}{k} \ge k$ vertices agreeing in every instance. Based on the way we added candidates, this must include at least $k$ voters corresponding to vertices. These $k$ voters must come from different partitions of the graph, as otherwise they would not agree in at least one round. Further, in the other rounds, they must agree on a vertex and are thus all connected to this vertex. Therefore, they form a clique. 
    Analogously, all the voters correspond to a clique (together with potentially added voters approving everything) also witness a w-JR violation, since they all agree on a candidate during each round and are large enough by construction.
    
\section{Omitted Proofs from Section \ref{sec:mon}}
\subsection{Proof of Theorem \ref{thm:mon-ar}}
Suppose that $(P, N, \ell, ({\mathbf s}_i)_{i\in N})$ is monotonic, 
and consider a group of voters $N'\subseteq N$.
Note that $\beta(N')=\ell$ if and only if 
$\cap_{i\in N'}s_{i, 1}\neq\varnothing$: if all voters in $N'$
approve some candidate $p$ in the first round, they also approve $p$
in all subsequent rounds. This enables us to proceed similarly
to the proof of Proposition~\ref{prop:m-ell}.

Specifically, fix an outcome $\mathbf o$, 
and for each $p\in P$, let $N_p=\{i\in N: p\in s_{i, 1}\}$.
Order the voters in $N_p$ according to their satisfaction from $\mathbf o$
in non-decreasing order, breaking ties lexicographically. 
For each $z=1, \dots, |N_p|$, let $N_p^z$
be the set that consists of the first $z$ voters in this order. 
Note that $\beta(N_p^z)=\ell$.

We claim that $\mathbf o$ provides w-EJR if and only if 
for each $p\in P$ and each $z=1, \dots, |N_p|$ it holds that
$\textit{sat}_i({\mathbf o})\ge \alpha(N_p^z)$ for some $i\in N_p^z$.
Note that this means that we can verify whether $\mathbf o$ 
provides w-EJR by considering at most $mn$ groups of voters and performing 
a polynomial amount of computation for each group, i.e., in polynomial time.

Now, if there exists a $p\in P$ and  $z\in\{1, \dots, |N_p|\}$
such that $\textit{sat}_i({\mathbf o})< \alpha(N_p^z)$ for all 
$i\in N_p^z$, then $\mathbf o$ fails to provide w-EJR, as witnessed by $N_p^z$.

Conversely, if there is a set $N'$ witnessing that $\mathbf o$
fails to provide w-EJR, we have $\beta(N')=\ell$ and hence
there exists a candidate $p\in P$ such that $p\in \cap_{i\in N'} s_{i, 1}$;
in particular, this implies $N'\subseteq N_p$. 
Let $z=|N'|$, and let $u = \max_{i\in N'}\textit{sat}_i({\mathbf o})$;
since $N'$ witnesses that $\mathbf o$ fails to provide w-EJR,
we have $u< \alpha(N')$.
Since $\beta(N')= \beta(N^z_p)$ and $|N'|=z=|N_p^z|$, 
we have $\alpha(N')=\alpha(N_p^z)$. 
Also, by Observation~\ref{obs:compare} (applied to the satisfactions 
of voters in $N'$ and $N_p^z$), for each $j\in N_p^z$ we have 
$\textit{sat}_j({\mathbf o})\le u < \alpha(N')=\alpha(N_p^z)$, i.e., 
$N_p^z$ is also a witness that $\mathbf o$ fails to provide w-EJR.
As our algorithm checks $N_p^z$, 
it will be able to detect that $\mathbf o$ fails to provide w-EJR.

To verify whether $\mathbf o$ provides w-PJR or w-JR, we modify
the checks that the algorithm performs for each group of voters, 
just as in the proof of Proposition~\ref{prop:m-ell}; 
we omit the details.

\subsection{Proof of Theorem \ref{thm:struct}}
Fix a monotonic election $(P, N, \ell, ({\mathbf s}_i)_{i\in N})$ 
    and an outcome $\mathbf o$.
    We can assume without loss of generality that $P=\cup_{i\in N}s_{i, \ell}$:
    if a candidate receives no approvals in the last round, it receives 
    no approvals at all, and can be removed.

    Now, for each $p\in P$ and each $t\in [\ell]$
    let $N_{p, t}=\{i\in N: p\in s_{i, t}\}$.
    Note that all voters in $N_{p, t}$ approve $p$ in each of the rounds
    $t, t+1, \dots, \ell$, so $\beta(N_{p, t})\ge \ell-t+1$.
    Order the voters in $N_{p, t}$ according to their satisfaction from $\mathbf o$
    in non-decreasing order, breaking ties lexicographically. 
    For each $z=1, \dots, |N_{p, t}|$, let $N_{p, t}^z$ be the set that consists 
    of the first $z$ voters in this order.

    We claim that $\mathbf o$ provides EJR if and only if 
    for each $p\in P$, $t\in [\ell]$ and each $z=1, \dots, |N_{p, t}|$ 
    it holds that $\textit{sat}_i({\mathbf o})\ge \alpha(N_{p, t}^z)$ 
    for some $i\in N_{p, t}^z$. 
    This means that we can verify whether $\mathbf o$ provides EJR
    by considering at most $m\ell n$ groups of voters and performing 
    a polynomial amount of computation for each group, i.e., in polynomial time.
    
    Indeed, if there exist $p\in P$, $t\in [\ell]$ and  $z\in\{1, \dots, |N_{p, t}|\}$
    such that $\textit{sat}_i({\mathbf o})< \alpha(N_{p, t}^z)$ for all 
    $i\in N_{p, t}^z$, then $\mathbf o$ fails to provide EJR, 
    as witnessed by $N_{p, t}^z$.

    Conversely, suppose there is a set $N'$ witnessing that $\mathbf o$
    fails to provide EJR. Let $t$ be the first round such that 
    $\cap_{i\in N'}s_{i, t}\neq \varnothing$, and let $p$ be some
    candidate in $\cap_{i\in N'}s_{i, t}$. Note that 
    $p\in \cap_{i\in N'}s_{i, \tau}$ for all $\tau=t, \dots, n$
    and hence $N'\subseteq N_{p, t}$. Moreover, 
    we have $\beta(N')=\ell-t+1$: the voters in $N'$ agree on $p$
    in rounds $t, \dots, \ell$, and, by the choice of $t$, there
    is no candidate they all agree on in earlier rounds.

    Let $z=|N'|$, and let $u = \max_{i\in N'}\textit{sat}_i({\mathbf o})$;
    since $N'$ witnesses that $\mathbf o$ fails to provide EJR,
    we have $u< \alpha(N')$. 
    Since $\beta(N')=\ell-t+1\le \beta(N_{p, t}^z)$ and $|N'|=z=|N_{p, t}^z|$, 
    we have $\alpha(N')\le \alpha(N_{p, t}^z)$. 
    Also, by Observation~\ref{obs:compare} (applied to the satisfactions 
    of voters in $N'$ and $N_{p, t}^z$),
    it follows that for each $j\in N_{p, t}^z$ we have $\textit{sat}_j({\mathbf o})\le u < \alpha(N')\le \alpha(N_p^z)$, i.e., 
    $N_{p, t}^z$ is also a witness that $\mathbf o$ fails to provide EJR.
    As our algorithm checks $N_{p, t}^z$, 
    it will be able to detect that $\mathbf o$ fails to provide EJR.

To verify whether $\mathbf o$ provides PJR or JR, we modify
the checks that the algorithm performs for each group of voters, 
just as in the proof of Proposition~\ref{prop:m-ell}; 
we omit the details.

\section{Omitted Proofs from Section \ref{sec:s-ejr}}

\subsection{Proof of Theorem \ref{thm:ilp}}
Consider an election $E=(P, N, \ell, ({\mathbf s}_i)_{i\in N})$.

        We first establish that we can assume $|P|\le 2^n$. Indeed, 
        fix a subset of voters $V\subseteq N$ and a round $t\in [\ell]$. If there
        are two candidates $p, p'$ such that for every $i\in N$ we have either $p, p'\in s_{i, t}$
        or $p, p'\not\in s_{i, t}$ then we can remove $p'$ from the approval sets
        of all voters in $N$ at round $t$: every outcome $\mathbf o$ that provides EJR
        and satisfies $o_t=p'$ can be modified by setting $o_t=p$ instead, without affecting
        representation. That is, we can assume that 
        each candidate at round $t$ is uniquely identified by the set of voters who approve her.
        It follows that we can modify the input election $E$ 
        by replacing $P$ with $P'=\{p_V: V\subseteq N\}$ and modifying the voters' approval
        sets accordingly: for each $i\in N$, $t\in [\ell]$ we place $p_V$ in $s'_{i, t}$
        if and only if the approval set $s_{i, t}$ contains a candidate $p$ with 
        $\{i': p\in s_{i', t}\} = V$. The modified election $E'=(P', N, \ell, ({\mathbf s}'_i)_{i\in N})$
        satisfies $|P'|\le 2^n$, and an outcome ${\mathbf o}'$ of $E'$  that provides
        EJR can be transformed into an outcome $\mathbf o$ of $E$ that provides EJR as follows: 
        for each $t\in [\ell]$,
        we set $o_t=p$ for some $p$ such that $o'_t=p_V$ and $p\in \cap_{i\in V}s_{i, t}$.
        Thus, from now on we will assume that $|P|\le 2^n$.
      
        Now, note that each round can be characterized by $n$ voters' approval sets, 
        i.e., a list of length $n$ whose entries are subsets of $P$. Let $\mathbf T$
        denote the set of all such lists; we will refer to elements of $\mathbf T$
        as round {\em types}. There are $2^m$ possibilities for each voter's approval set, 
        and hence $|{\mathbf T}|=(2^m)^n$.
        For each $\tau \in \mathbf T$,
        let $\kappa_\tau \in \mathbb{Z}_{\geq 0}$ 
        be the number of rounds of type $\tau$.
        Overloading notation, we write $p\in s_{i, \tau}$ if $p\in s_{i, t}$
        for a round $t$ of type $\tau$.

       Our ILP will have a variable $x_{p, \tau}$ for each $\tau\in\mathbf T$ and each 
       $p\in P$, 
       i.e., at most $m\cdot (2^m)^n\le 2^n\cdot (2^{2^n})^n$ variables:
       the variable $x_{p, \tau}$ indicates how many times we choose candidate $p$ at a round
       of type $\tau$ and takes values in $0, \dots, \kappa_\tau$. 
       For each $\tau\in\mathbf T$, 
       we introduce a constraint
       \begin{equation}\label{eq:validx}
	\sum_{p\in P}x_{p, \tau}=\kappa_\tau; 
       \end{equation}
       these constraints guarantee
       that solutions to our ILP encode valid outcomes. 

       It remains to introduce constraints
       encoding the EJR axiom. To this end, for each group of voters $V$ we add a constraint
       saying that at least one voter in $V$ has satisfaction at least $\alpha(V)$.

       Observe that the satisfaction of voter $i$ from rounds of type $\tau$ can be written
       as $\sum_{p\in s_{i, \tau}}x_{p, \tau}$; thus, 
       the total satisfaction of $i$ is given by 
       $\sum_{\tau\in\mathbf T}\sum_{p\in s_{i, \tau}} x_{p, \tau}$.
       
       Next, for each $V\subseteq N$ with $V\neq\varnothing$ and each $i\in V$, 
       we introduce a variable $\xi_{i, V}$; these variables take values in $\{0, 1\}$ 
       and indicate
       which voter $i\in V$ receives satisfaction at least $\alpha(V)$. The constraints
       \begin{equation}\label{eq:validxi}
         0\le \xi_{i, V}\le 1, \quad \sum_{i\in V}\xi_{i, V}\ge 1\qquad
                             \text{for all $V\in 2^N\setminus\{\varnothing\}$}
       \end{equation}
       guarantee that for every nonempty subset of voters $V$ we have $\xi_{i, V}=1$
       for at least one voter $i\in V$. Now, we can capture EJR by adding constraints
       \begin{equation}\label{eq:EJR}
         \sum_{\tau\in\mathbf T}\sum_{p\in s_{i, \tau}} x_{p, \tau} \ge\alpha(V)\cdot \xi_{i, V}
         \text{ for all $V\in 2^N\setminus\{\varnothing\}, i\in V$}.        
       \end{equation}
       Indeed, constraint~\eqref{eq:EJR} ensures that the satisfaction of 
       at least one voter in $V$ (one with $\xi_{i, V}=1$)
       is at least $\alpha(V)$.

       We conclude that every feasible solution to the ILP given by 
       the constraints~\eqref{eq:validx}--\eqref{eq:EJR}
       encodes an outcome that provides EJR; moreover, both the number of variables
       and the number of constraints in this ILP can be bounded by functions of $n$.

\subsection{Proof of Proposition \ref{prop:s-EJR_nonexistence}}
Consider an instance with an even number of agents $n=2k$, $k\ge 4$, candidate set $P = \{p_1,\dots,p_n\}$, and $n$ rounds.
    We construct the approval sets as follows. 
    \begin{center}
        \begin{tabular}{ c | c c c c c c c c}
            $s_{i,t}$ & $1$ & $\dots$  & $k$ & $k+1$ & $\dots$ & $n$\\
            \hline
            $1$ & $\{p_1\}$ & $\dots$ & $\{p_1\}$ & $\{p_n\}$ & $\dots$ & $\{p_n\}$\\ 
            $2$ & $\{p_2\}$ & $\dots$ & $\{p_2\}$ & $\{p_n\}$ & $\dots$ & $\{p_n\}$\\  
            $\vdots$ & $\vdots$ & $\vdots$ & $\vdots$ & $\vdots$ & $\vdots$ & $\vdots$\\
            $k$ & $\{p_{k}\}$ & $\dots$ & $\{p_{k}\}$ & $\{p_n\}$ & $\dots$ & $\{p_n\}$\\
            $k+1$ & $\{p_{k+1}\}$ & $\dots$ & $\{p_{k+1}\}$ & $\{p_1\}$ & $\dots$ & $\{p_1\}$\\
            $k+2$ & $\{p_{k+2}\}$ & $\dots$ & $\{p_{k+2}\}$ & $\{p_2\}$ & $\dots$ & $\{p_2\}$\\
            $\vdots$ & $\vdots$ & $\vdots$ & $\vdots$ & $\vdots$ & $\vdots$ & $\vdots$\\
            $n$ & $\{p_n\}$ & $\dots$ & $\{p_n\}$ & $\{p_{k}\}$ & $\dots$ & $\{p_{k}\}$\\  
        \end{tabular}
        \end{center}
        By symmetry, we can assume without loss of generality that in the first $k$ rounds, we select $o_t=p_t$ for $t\in [k]$.
        Note that for each $i=k+1, \dots, n$ the group $N' = \{i\}$ satisfies $\beta(N')=n$ and hence $\alpha(N')=1$.
        Thus, in order to satisfy w-JR, in the next $n/2$ rounds, we must select each of $p_1,\dots, p_{k}$ exactly once.
        However, the resulting outcome $\mathbf o$ fails EJR: indeed,  
        $N''=[k]$ satisfies $\beta(N'')=k$, 
        $\alpha(N'')=k/2\ge 2$, while for each $i\in N''$ we have $\mathit{sat}_i({\mathbf o}) = 1$.
\end{document}